\tikzset{
  vert/.style={circle, draw=black!100,fill=black!100,thick, inner sep=0pt, minimum size=.8mm},
  empty/.style={draw=none, fill=none, minimum size=0mm, inner sep=0pt}}
\newcommand{\NP}{{\sf NP}}
\newcommand{\W}{{\sf W}}
\newcommand{\cP}{{\sf P}}
\newcommand{\QCSP}{\mathrm{QCSP}}
\newcounter{ctrclaim}[theorem]
\newcounter{ctrcase}[theorem]
\newcommand{\FPT}{{\sf FPT}}
\newcommand{\problemdef}[3]{
        \begin{center}
                \begin{boxedminipage}{.99\textwidth}
                        \textsc{{#1}}\\[2pt]
                        \begin{tabular}{ r p{0.8\textwidth}}
                                \textit{~~~~Instance:} & {#2}\\
                                \textit{Question:} & {#3}
                        \end{tabular}
                \end{boxedminipage}
        \end{center}
}
\begin{document}

\title{Graph Homomorphism, Monotone Classes and Bounded Pathwidth\thanks{\textcolor{black}{An extended abstract of this article appeared at the 20th Conference on Computability in Europe (CiE 2024) \cite{CiE2024}.}}}
\titlerunning{Graph Homomorphism, Monotone Classes and Bounded Pathwidth}
\author{Tala Eagling-Vose\inst{1}\orcidID{0009-0008-0346-7032} \and
Barnaby Martin\inst{1}\orcidID{0000-0002-4642-8614} \and
Dani\"el Paulusma\inst{1}\orcidID{0000-0001-5945-9287} \and
Siani Smith \inst{2}\orcidID{0000-0003-0797-0512
}
}

\authorrunning{T. Eagling-Vose, B. Martin, D. Paulusma, S. Smith}

\institute{Durham University, Durham, United Kingdom \email{\{tala.j.eagling-vose,barnaby.d.martin,daniel.paulusma\}@durham.ac.uk} \and
University of Bristol and Heilbronn Institute for Mathematical Research, Bristol, United Kingdom
\email{siani.smith@bristol.ac.uk}
}

\maketitle             

\begin{abstract}
\textcolor{black}{
In recent work by Johnson et al. (2022), a framework was described for the study of graph problems over classes specified by omitting each of a finite set of graphs as subgraphs. If a problem falls into the framework then its computational complexity can be described for all such graph classes, giving a dichotomy between those classes for which the problem is hard and those for which it is easy. Usually, hard is NP-complete and easy is in P, though some examples are given where hard is quadratic and easy is almost linear.}

\textcolor{black}{In this article}, we consider several variants of the homomorphism problem in relation to this framework. It is known that certain homomorphism problems, e.g. {\sc $C_5$-Colouring}, do not sit in the framework. By contrast, we show that the more general problem of {\sc Graph Homomorphism} does sit in the framework, \textcolor{black}{with hard cases NP-complete and easy cases in P}. 

\textcolor{black}{We go on to} consider several locally constrained variants of the homomorphism problem, namely the locally bijective, surjective and injective variants. Like {\sc $C_5$-Colouring}, none of these is in the framework. However, where a bounded-degree restrictions are considered, we prove that each of these problems is in our framework, \textcolor{black}{with hard cases NP-complete and easy cases in P}.

\textcolor{black}{Next}, we give the first example of a problem in the framework such that hardness is in the polynomial hierarchy above NP. This comes from a list colouring game, \textcolor{black}{realised through first-order logic as quantified constraints}. We show that with the \textcolor{black}{additional} restriction of bounded alternation, the problem is contained in the framework. The hard cases are $\Pi_{2k}^\mathrm{P}$-complete and the easy cases are in P.

\textcolor{black}{Finally, we go on to consider an aforementioned problem from our framework, complete for the second level of the polynomial hierarchy, under the omission in the input of not just a graph, but rather a graph $H$ annotated with the types for each vertex: existential or universal. This natural formulation gives finer control over the structure that we forbid in the input sentences, which we realise through the annotated graphs. We address especially the case where $|V(H)|$ has size $3$.}

\keywords{Monotone Classes \and Sequential Colouring Construction Game  \and Quantified Constraint Satisfaction Problem \and Vertex Separation Number \and Pathwidth \and Treedepth.}
\end{abstract}

\section{Introduction}

Let $G$ and $H$ be two graphs. If $H$ can be obtained from $G$ by a sequence of vertex deletions only, then $H$ is {\it an induced} subgraph of $G$; else $G$ is {\it $H$-free}. The induced subgraph relation has been well studied in the literature for many classical graph problems, such as {\sc Colouring} \cite{KKTW01}, {\sc Feedback Vertex Set} \cite{PPR22}, {\sc Independent Set} \cite{GKPP22}, and so on.

Here we focus on the subgraph relation. A graph $G$ is said to contain a graph $H$ as a {\it subgraph} if $H$ can be obtained from $G$ by a sequence of vertex deletions and edge deletions; else $G$ is said to be {\it $H$-subgraph-free}. For a set of graphs ${\cal H}$, a graph $G$ is {\it ${\cal H}$-subgraph-free} if $G$ is $H$-subgraph-free for every $H\in {\cal H}$; we also write that $G$ is $(H_1,\ldots,H_p)$-subgraph-free, if ${\cal H}=\{H_1,\ldots,H_p\}$. 
Graph classes closed under edge deletion are also called {\it monotone}~\cite{ABKL07,BL02}. Monotone classes that are specified by a finite set of omitted subgraphs are called 
{\it finitely-bounded}.

When compared to those for $H$-free graphs there are relatively few complexity classifications for $H$-subgraph-free graphs. Despite this; see~\cite{AK90} for complexity classifications of {\sc Independent Set}, {\sc Dominating Set} and {\sc Longest Path};  
and~\cite{GP14,Ka12} for classifications of {\sc List Colouring} and {\sc Max Cut}, respectively. 
All of these classifications hold even for ${\cal H}$-subgraph-free graphs, where ${\cal H}$ is any finite set of graphs. 
In general, such classifications might be hard to obtain; see, for example, \cite{GPR15} for a partial classification of {\sc Colouring} for $H$-subgraph-free graphs. 

Therefore, in~\cite{JMOPPSV}
a more systematic approach was followed, namely by introducing a new framework for ${\cal H}$-subgraph-free graph classes (finite ${\cal H}$) adapting the approach 
of~\cite{Ka12}. 
To explain the framework of~\cite{JMOPPSV} we need to introduce some additional terminology. {\it Treewidth} and {\it pathwdith} are two widly studied width parameters broadly capturing likeness to a tree or path respectivly.
A class of graphs has bounded {\it treewidth} or {\it pathwdith} if there exists a constant~$c$ such that every graph in it has treewidth or pathwidh, respectively, at most~$c$. Now let $G=(V,E)$ be a graph.
Then $G$ is {\it subcubic} if every vertex of $G$ has degree at most~$3$.
The {\it subdivision} of an edge $e=uv$ of $G$ replaces $e$ by a new vertex $w$ with edges $uw$ and $wv$. For an integer $k\geq 1$, the {\it $k$-subdivision} of~$G$ is the 
graph~$G^k$ obtained from~$G$ by subdividing each edge of $G$ exactly $k$ times. For a class of graphs ${\cal G}$ and an integer~$k$, ${\cal G}^k$ consists of the $k$-subdivisions of the graphs in ${\cal G}$.

We say that a graph problem $\Pi$ is computationally hard
{\it under edge subdivision of subcubic graphs} if for every integer $j \geq 1$ there is an integer~$\ell \geq j$ such that:
if $\Pi$ is computationally hard for the class ${\cal G}$ of subcubic graphs, then $\Pi$ is computationally hard for ${\cal G}^{\ell}$. The framework of~\cite{JMOPPSV} makes a distinction between ``efficiently solvable'' and ``computationally hard'', which could for example mean a distinction between $\cP$ and \NP-complete. 

Commonly, we can prove the condition holds by showing that computational hardness is maintained under $k$-subdivision for a small integer $k$ (e.g.~$k=1,2,3,4$) and then repeatedly apply the $k$-subdivision operation. 
%
We can therefore say a graph problem~$\Pi$ is a {\it C123-problem} (belongs to the framework) if it satisfies the 
three conditions:

\begin{enumerate}
\item [{\bf C1.}] $\Pi$ is efficiently solvable for every graph class of bounded pathwidth\footnote{In the original framework paper \cite{JMOPPSV}, pathwidth and treewidth are interchangeable in this position.};
\item [{\bf C2.}] $\Pi$ is computationally hard 
 for the class of subcubic graphs; and
\item [{\bf C3.}] $\Pi$ is computationally hard under edge subdivision 
of subcubic graphs.
\end{enumerate}

\noindent
To describe the impact of these conditions, we need some notation. The {\it claw} is the $4$-vertex star.  A {\it subdivided} claw is a graph obtained from a claw after subdividing each of its edges zero or more times. The {\it disjoint union} of two vertex-disjoint graphs $G_1$ and $G_2$ has vertex set $V(G_1)\cup V(G_2)$ and edge set  $E(G_1)\cup E(G_2)$. The set ${\cal S}$ consists of the graphs that are disjoint unions of subdivided claws and paths. As shown in~\cite{JMOPPSV}, C123-problems allow for full complexity classifications for ${\cal H}$-subgraph-free graphs (as long as ${\cal H}$ has finite size).

\begin{theorem}[\cite{JMOPPSV}]\label{t-dicho2}
Let $\Pi$ be a C123-problem.
For a finite set ${\cal H}$, the problem $\Pi$ on ${\cal H}$-subgraph-free graphs is efficiently solvable if ${\cal H}$ contains a graph from ${\cal S}$ and computationally hard otherwise.
\end{theorem}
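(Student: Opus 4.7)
Suppose ${\cal H}$ contains some $H\in{\cal S}$. Since every ${\cal H}$-subgraph-free graph is, in particular, $H$-subgraph-free, it suffices to give an efficient algorithm on the (larger) class of $H$-subgraph-free graphs. The plan is to invoke the known structural fact that, for every $H\in{\cal S}$ (a disjoint union of paths and subdivided claws), the class of $H$-subgraph-free graphs has bounded pathwidth. Condition C1 then yields an efficient algorithm for $\Pi$ on this class, and hence on ${\cal H}$-subgraph-free graphs.

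\textbf{Hard direction.} Suppose now that no $H\in{\cal H}$ lies in ${\cal S}$. By C2, $\Pi$ is computationally hard on the class ${\cal G}$ of subcubic graphs, and by iterated application of C3 this hardness is preserved on ${\cal G}^k$ for arbitrarily large integers $k$. The aim is to pick $k$ so large that every graph in ${\cal G}^k$ is ${\cal H}$-subgraph-free; hardness will then transfer to the full class of ${\cal H}$-subgraph-free graphs. To see this, fix $H\in{\cal H}$. Since $H\notin{\cal S}$, at least one of the following holds: (i) some vertex of $H$ has degree at least $4$; (ii) $H$ contains a cycle; or (iii) some component of $H$ is a tree with at least two vertices of degree $3$. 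In case (i), each $G^k$ is subcubic and hence cannot contain $H$. In case (ii), cycles of $G^k$ correspond bijectively to cycles of $G$ with each edge replaced by a path of length $k+1$, so every cycle of $G^k$ has length $m(k+1)$ with $m\geq 3$; for $k+1>|V(H)|$ no cycle of $H$ embeds. In case (iii), pick two degree-$3$ vertices $u,v$ at minimum $H$-distance $\ell$ in the same component, so that all intermediates on the $u$-to-$v$ path in $H$ have degree $2$; any subgraph embedding $\phi\colon H\hookrightarrow G^k$ must send $u,v$ to degree-$3$ vertices of $G^k$, which are exactly the original vertices of $G$, and the image of the $u$-to-$v$ path is a path in $G^k$ of length $\ell$ between two such vertices. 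But any such path in $G^k$ has length a positive multiple of $k+1$, impossible once $k+1>\ell$. Since ${\cal H}$ is finite, a single sufficiently large $k$ serves all $H\in{\cal H}$ uniformly.

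\textbf{Main obstacle.} The principal non-routine ingredient is the structural fact used in the easy direction: for $H\in{\cal S}$, the class of $H$-subgraph-free graphs has bounded pathwidth. Once this is in hand, the remainder of the proof is a routine combination of C1-C3 with the subdivision-avoidance argument above.
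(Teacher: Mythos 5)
This theorem is imported from \cite{JMOPPSV} and the paper gives no proof of it, so there is nothing internal to compare against; your reconstruction is essentially the argument of the cited source and is correct. The trichotomy for $H\notin{\cal S}$ (a vertex of degree $\ge 4$, a cycle, or two branch vertices in one tree component) is exactly right, and the divisibility-by-$(k+1)$ arguments for girth and for branch-vertex distances in $G^k$ are sound; note only that C3 as defined hands you \emph{some} $\ell\ge j$ rather than a chosen $k$, which is harmless since your avoidance conditions hold for all sufficiently large subdivision parameters. The structural fact you defer in the easy direction is genuine and known: for $H\in{\cal S}$, graphs of sufficiently large pathwidth contain a large complete binary tree as a minor (the excluded-forest theorem), hence as a topological minor since it is subcubic, hence a subdivision of it as a subgraph, which in turn contains $H$ as a subgraph; so $H$-subgraph-free classes have bounded pathwidth and C1 applies.
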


\noindent
Examples of C123-problems include {\sc Independent Set}, {\sc Dominating Set}, {\sc List Colouring}, {\sc Odd Cycle Transversal}, {\sc Max Cut}, {\sc Steiner Tree} and {\sc Vertex Cover}; see~\cite{JMOPPSV}. However, there are still many graph problems that are not C123-problems, such as {\sc Colouring} (whose classification is still open even for $H$-subgraph-free graphs). Hence, it is a natural question if those problems can still be classified for graph classes defined by some set of forbidden subgraphs.

This paper considers the {\sc Graph Homomorphism} problem alongside two variants: a graph colouring game, and locally constrained homomorphism problems. Here, we will define the first of these, while details of the latter two will be addressed in greater detail in their respective sections.

The problem {\sc $H$-Colouring} takes an input graph $G$ and asks whether there is a homomorphism from $G$ to $H$ i.e a function $h : V(G) \to V(H)$ such that $xy \in E(G)$ then $h(x)h(y) \in E(H)$. If there is, we write $G \rightarrow H$. The more general problem {\sc Graph Homomorphism} takes both $G$ and $H$ as input, with the same question, whether $G \rightarrow H$. In general, {\sc $H$-Colouring} is not a C123-problem, for example, {\sc $C_5$-Colouring} does not satisfy C3 (as it satisfies C1 and C2, we label it a {\it C12-problem}). The topic of such C12-problems is elaborated in \cite{MPPSMvL}. By contrast, we show that {\sc Graph Homomorphism} is a C123-problem (where C3 is applied uniformly to both $G$ and $H$). Grohe has argued in \cite{Gr07} that, assuming $\FPT\neq \W[1]$ (an assumption widely believed in Parameterized Complexity), {\sc Graph Homomorphism} is in P on a restricted class of graphs if, and only if that class has bounded treewidth. 
Bearing in mind the interchangeability of treewidth and pathwidth in our framework, we are improving ``not in P'' to ``NP-complete'' for finitely-bounded monotone classes.

We continue by considering locally constrained variants of the {\sc Graph Homomorphism} problem. These problems, {\sc Locally Bijective Homomorphism}, {\sc Locally Surjective Homomorphism} and {\sc Locally Injective Homomorphism}, denoted {\sc LBHom}, {\sc LSHom} and {\sc LIHom} respectively, have all been intensively studied in the literature~\cite{FK08}.
When either $G$ or $H$ has bounded (vertex) degree, we prove that each of these three problems is a C123-problem, just like {\sc Graph Homomorphism}, with a dichotomy between P and NP-complete on finitely-bounded monotone classes. Whereas without such a degree bound they are C23-problems, i.e satisfy only C2 and C3 and are elaborated further in \cite{BJMOPPSV}.

The graph colouring game we consider is a variant of that proposed by Bodlaender in \cite{Bo91}, {\sc Sequential Colouring Construction Game}. Here two players alternate in colouring vertices of a graph with one of $k$ colours, with the first player winning when a proper colouring is obtained. In particular, fixing $k=3$, this problem closely relates to the \emph{quantified constraint satisfaction problem} {\sc QCSP$(K_3)$}. Both of these problems are PSPACE-complete \cite{Bo91,BBCJK09}.

Bodlaender proves that {\sc Sequential $3$-Colouring Construction Game} is in P on any class of bounded vertex separation number. It is well-known that classes of bounded vertex separation number and classes of bounded pathwidth coincide \cite{Ki92}. However here the vertex separation number is with respect to the order in which the vertices are played. A na\"ive reading of \cite{Bo91} risks a fundamental misinterpretation, as we prove that {\sc Sequential $3$-Colouring Construction Game} is PSPACE-complete on some class of bounded pathwidth. We do this by mediating through the closely related problem {\sc QCSP$(K_3)$} using a celebrated result of Atserias and Oliva \cite{AO14}. In that paper, they prove that {\sc Quantified Boolean Formulas} (QBF) is PSPACE-complete on some class of bounded pathwidth, even when the input is restricted to conjunctive normal form (CNF).

Within the framework, just as {\sc List Colouring} was considered, we can extend our attention to a list colouring game. Adjoining some unary relations to {\sc $\QCSP(K_3)$}, is analogous to moving from {\sc $3$-Colouring} to {\sc List $3$-Colouring}. Using only two of these lists we arrive at the problem $\QCSP(K_3, \{1,2\},\{1,3\})$. Owing to the aforementioned hardness for bounded pathwidth, and unlike {\sc List Colouring}, $\QCSP(K_3, \{1,2\},\{1,3\})$ is not a C123-problem, but is a C23-problem. 

However, considering the bounded alternation restriction of this, we prove that $\Pi_{2k}$-$\QCSP(K_3, \{1,2\},\{1,3\})$ is a C123-problem, for which the hard cases are $\Pi^{\mathrm{P}}_{2k}$-complete while the easy cases are in P.

\textcolor{black}{The significance of our work lies in its uncovering the field of homomorphism problems and their variants as rich ground for finding framework problems, notwithstanding that is was known that $H$-colouring problems are not in the framework in general. Indeed, we go on to consider the problem $\Pi_{2}$-$\QCSP(K_3, \{1,2\},\{1,3\})$ under restrictions that concern omitting as a subgraph not just a graph but rather a graph in which all the vertices are labelled either universal or existential. We argue that this allows for finer classifications, and we classify the complexity for all such labelled graphs of size at most $3$, while proving also some general results.}

\textcolor{black}{We conclude by considering} our framework in relation to bounded pathwidth, which coincides (on finitely-bounded monotone classes) with essentially omitting some subdivided claw 
as a subgraph~\cite{RS84}. If we instead omit some path, we essentially get the graphs of bounded treedepth~\cite{NOM12}. 
For {\sc LBHom}, {\sc LSHom} and {\sc LIHom}, there exists some bounded treedepth class on which they are \NP-complete~\cite{BDKOP22}. We will describe a C23-problem, namely {\sc Long Edge Disjoint Paths} that is easy on graphs of bounded treedepth but hard on certain classes of bounded pathwidth.  Within previous work on C23-problems an example with this behaviour has not yet been discussed. 
The complexity of QBF on bounded treedepth is a famous open problem (see e.g.~\cite{FGH23}, where it is proved to be in P under some further restrictions). The complexity of $\QCSP(K_3)$ on bounded treedepth may be similarly elusive to classify, and it is on this note that we conclude.

\textcolor{black}{This paper is organised as follows. After the preliminaries, we address graph homomorphism in Section~\ref{sec:graph-homomorphism} and locally constrained graph homomorphism in Section~\ref{sec:graph-local-homomorphism}. We then consider sequential colouring games in Section~\ref{s-seq} and QCSPs in Section~\ref{s-seq-plus}. We then go on to consider bounded pathwidth in Section~\ref{sec:long} and labelled graphs in Section~\ref{sec:new-for-journal}. We conclude with some final remarks and open question in Section~\ref{sec:final-remarks}. We sum up our results in Figure~\ref{fig:results}.}

{\color{black}
This paper is a considerable extension of \cite{CiE2024}. The proofs have been elaborated and Figure~\ref{fig:results} has been added. The whole of Section~\ref{sec:new-for-journal} is new.
}

\begin{figure}[h]
\color{black}
\begin{center}
\begin{tabular}{|c|c|c|c|}
\hline
problem & C123-framework & easy & hard \\
\hline
{\sc $C_5$-Colouring} & no & & \\
\hline
{\sc Graph Homomorphism} & yes & P & NP-complete \\
\hline
  {\sc LBHom} & no & & \\
  {\sc LSHom} & no & & \\
    {\sc LIHom} & no & & \\
\hline
  {\sc Degree-$3$-LBHom} & yes & P & NP-complete \\
  {\sc Degree-$3$-LSHom} & yes & P & NP-complete \\
    {\sc Degree-$3$-LIHom} & yes & P & NP-complete \\
\hline
$\Pi_{2k}\mbox{-}\mbox{QCSP}(K_3, \{1,2\},\{1,3\})$ & yes & P & $\Pi_{2k}^{\mathbf{P}}\mbox{-}$complete \\
\hline
\end{tabular}
\end{center}
\caption{Summary of our results.}
\label{fig:results}
\end{figure}

\section{Preliminaries}

A \textit{tree decomposition} for a graph $G=(V,E)$ is a pair $(T,X)$ where $T$ is a tree and $X$ consists of subsets of vertices from $V$ which we call bags. Each node of $T$ corresponds to a single bag of $X$. For each vertex $v \in V$ the nodes of $T$ containing $v$ must induce a non-empty connected subgraph of $T$ and for each edge $uv \in E$, there must be at least one bag containing both $u$ and $v$. Similarly, we can define a \textit{path decomposition} where $T$ must instead be a path. We can then define the \textit{width} of $(T,X)$ to be one less than the size of the largest bag. From this, the \textit{treewidth} of a graph, $\mathit{tw}(G)$, is the minimum width of any \textit{tree decomposition} and the \textit{pathwidth} $\mathit{pw}(G)$, is the minimum width of any \textit{path decomposition}. As every path decomposition is also a tree decomposition $\mathit{tw}(G) \leq \mathit{pw}(G)$.

In the following, $G$ and $H$ are graphs, and $f$ is a {\it homomorphism} from $G$ to $H$, that is, $f(u)f(v)$ is an edge in $H$ whenever $uv$ is an edge in $G$. We denote the (open) neighbours of a vertex $u$ in $G$ by $N_G(u)=\{v\; |\; uv\in E(G)\}$. We say that $f$ is {\it locally injective, locally bijective} or {\it locally surjective for a vertex $u \in V(G)$} if the restriction $f_u: N_G(u)\rightarrow N_H(f(u))$ of $f$ is injective, bijective or surjective, respectively. Now,
$f$ is said to be {\it locally injective}, {\it locally bijective}  or {\it locally surjective} if $f$ is locally injective, locally bijective or locally surjective for every $u \in V(G)$.
{\color{black}
If $X$ is some subset of vertices $X \subseteq V(G)$, then $G[X]$ is the subgraph of $G$ induced by $X$. We define the closed neighbourhood $N[X]$ to be $\{v\; |\; v \in X \vee (\exists u \in X \wedge uv\in E(G))\}$.
}

$\QCSP(\mathcal{B})$, is defined for some relational structure $\mathcal{B}$ which for us will always be a graph. The problem takes as input a sentence $\phi = Q_1x_1Q_2x_2 \cdot Q_nx_n \Phi$ such that $Q_i \in \{ \exists,\forall \}$ and $\Phi$ is a conjunction of atomic formulas, \textit{constraints}. The primal graph of a formula $\Phi$ contains a vertex for each $x_i \in \Phi$ and an edge if and only if the two variables occur together in a constraint. Let $\Pi_{2k}$-$\QCSP(\mathcal{B})$ be the restriction of this problem to sentences in $\Pi_{2k}$-form, \mbox{i.e.} with quantifier prefix leading with universal quantifiers, alternating $2k-1$ times, concluding with existential quantifiers.

\section{Graph Homomorphism}
\label{sec:graph-homomorphism}

In this section we will prove that {\sc Graph Homomorphism} is a C123-problem.

\problemdef{Graph Homomorphism}{A graph $G$ and a graph $H$}{Is there a homomorphism from $G$ to $H$?}

Let us recall that we apply C1, C2 and C3 to both graphs. In particular, the subdivision of C3 uniformly applies to both inputs $G$ and $H$. For a graph $G$, recall that $G^r$ is $G$ with each edge replaced by a path of length $r+1$ (an $r$-subdivision). For example, ${K_3}^5=C_{15}$.

\begin{lemma}[\cite{DP89}]\label{l-dp89}
    {\sc Graph Homomorphism} satisfies C1.
\end{lemma}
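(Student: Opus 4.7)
The plan is to give a standard dynamic programming argument on a path decomposition of $G$, of the kind used to solve constraint satisfaction problems of bounded structural width. Recall that for graph classes of bounded pathwidth $k$, a path decomposition $(P, (X_1, \ldots, X_t))$ of $G$ with every bag of size at most $k+1$ can be computed in linear time. Since C1 is applied to both inputs, we may assume $G$ (and $H$) come from a class of pathwidth at most $k$; only the bound on $\mathit{pw}(G)$ will actually be needed.

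The algorithm processes the bags $X_1, X_2, \ldots, X_t$ from left to right, maintaining at each step a table
\[
T_i \;\subseteq\; \{\,\phi : X_i \to V(H)\,\}
\]
consisting of exactly those partial maps $\phi$ such that (i) $\phi$ respects every edge of $G$ both of whose endpoints lie in $X_i$, and (ii) $\phi$ extends to a homomorphism from $G[V_i]$ to $H$, where $V_i := X_1 \cup \cdots \cup X_i$. Converting the decomposition into a nice path decomposition with only \emph{introduce} and \emph{forget} operations, the update rules are the usual ones: to introduce a vertex $v$ into the bag, extend each $\phi \in T_{i-1}$ by every choice of $\phi(v) \in V(H)$ that is adjacent in $H$ to $\phi(u)$ for every neighbour $u$ of $v$ already present in the bag; to forget a vertex $v$, project $\phi$ onto $X_i = X_{i-1} \setminus \{v\}$. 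A \textsc{Yes}-instance is then characterised by $T_t \neq \emptyset$ (or equivalently, any $T_i$ becoming empty proves rejection).

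Correctness follows from the defining properties of a path decomposition: once a vertex $v$ is forgotten it never returns, so all edges incident to $v$ in $G$ have already been validated by the time $v$ is dropped; this is what guarantees that the projection step preserves invariant (ii). For the running time, each $T_i$ has at most $|V(H)|^{k+1}$ entries, each update touches at most $|V(H)|$ candidate images and a constant number of edges, and there are $O(|V(G)|)$ bags, giving total time $O(|V(G)| \cdot |V(H)|^{k+2})$, polynomial in the combined input size for every fixed $k$.

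The main obstacle, such as it is, is purely bookkeeping: being careful that C1 is being applied uniformly to both $G$ and $H$, and noting that only the hypothesis $\mathit{pw}(G) \le k$ is used, so the same algorithm in fact works for the asymmetric version in which $H$ is arbitrary. This is precisely the bounded-treewidth-of-the-left-hand-side tractability result for $\CSP$ underlying the cited reference~\cite{DP89}, specialised to graph homomorphism.
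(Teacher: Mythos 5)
Your proof is correct and is exactly the standard dynamic-programming argument over a path decomposition that underlies the cited result of Dechter and Pearl, which is all the paper relies on (it gives no proof of its own beyond the citation). Your observation that only $\mathit{pw}(G)\le k$ is needed is also accurate and consistent with how the lemma is used.
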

\begin{lemma}[\cite{GHN00}]
    {\sc Graph Homomorphism} satisfies C2.
\end{lemma}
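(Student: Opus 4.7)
The plan is to realize the subcubic hardness of \textsc{Graph Homomorphism} as a direct corollary of a fixed-target result. Specifically, I would appeal to the main theorem of Galluccio, Hell and Ne\v{s}et\v{r}il~\cite{GHN00}, which exhibits a fixed subcubic (in fact cubic) non-bipartite graph $H^{\star}$ such that \textsc{$H^{\star}$-Colouring} is \NP-complete even when the input graph $G$ is restricted to be subcubic. Given this, the lemma essentially reduces to a bookkeeping argument.

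First I would note that \textsc{Graph Homomorphism} on subcubic instances $(G,H)$ lies in \NP: a candidate map $f\colon V(G)\to V(H)$ has size $O(|V(G)|\log|V(H)|)$ and edge-preservation can be checked in polynomial time. For \NP-hardness, I would give a trivial many-one reduction from \textsc{$H^{\star}$-Colouring} restricted to subcubic inputs: given such an input $G$, output the pair $(G,H^{\star})$. Since $H^{\star}$ is a fixed subcubic graph and $G$ is subcubic, the produced instance lies in the subcubic class, and by definition $G\to H^{\star}$ iff $G$ admits an $H^{\star}$-colouring. This shows \textsc{Graph Homomorphism} is \NP-complete on subcubic graphs, establishing C2.

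The only real content of the lemma is the existence of the subcubic target $H^{\star}$, and this is precisely what would be the obstacle if one had to prove it from scratch. Observe that the naive candidate $H=K_{3}$ does not work: by Brooks' theorem, \textsc{$3$-Colouring} is polynomial on subcubic graphs (the only obstruction being a $K_{4}$ component). Hence one cannot simply invoke classical \textsc{$3$-Colouring} hardness and must instead use a more intricate non-bipartite cubic target, which is exactly the construction supplied by~\cite{GHN00}. Since the lemma is stated as a citation of that paper, no further work beyond the reduction above is required.
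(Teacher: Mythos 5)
Your proposal is correct and matches the paper's approach: the paper simply cites Galluccio, Hell and Ne\v{s}et\v{r}il for the existence of a fixed subcubic target (in their Theorem~3.1 it is $C_5$, which is $2$-regular rather than cubic) whose colouring problem is \NP-complete on subcubic inputs, and the lemma then follows by the trivial reduction $G \mapsto (G, H^{\star})$ exactly as you describe. Your observation that $K_3$ cannot serve as the target because of Brooks' theorem is also made explicitly in the paper.
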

Let us recall that we apply the subdivision of C3 uniformly to both inputs $G$ and $H$.
\begin{lemma} \label{lem:graph-hom-weak-C3}
    {\sc Graph Homomorphism} satisfies the variant of C3 that does not restrict to subcubic.
\end{lemma}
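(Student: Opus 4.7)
The plan is to use the natural reduction $(G,H) \mapsto (G^{\ell}, H^{\ell})$ for a suitably chosen even integer $\ell \geq j$. Both outputs are by construction $\ell$-subdivisions, so this lands in the desired class; the task is to verify $G \to H$ if and only if $G^{\ell} \to H^{\ell}$. The forward direction is routine: any homomorphism $f : G \to H$ extends edge-path by edge-path to $\tilde{f} : G^{\ell} \to H^{\ell}$, mapping each subdivided path $u = w_0, w_1, \dots, w_{\ell+1} = v$ in $G^{\ell}$ onto the subdivided path $f(u) = x_0, x_1, \dots, x_{\ell+1} = f(v)$ of the edge $f(u)f(v) \in E(H)$ in $H^{\ell}$.

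For the reverse direction, given $\phi : G^{\ell} \to H^{\ell}$, the goal is to extract $f : G \to H$. The key structural observation, valid because $\ell+1$ is odd, is that any walk of length $\ell+1$ in $H^{\ell}$ between two original vertices of $H$ must traverse exactly one complete subdivided edge of $H$, and hence witnesses a genuine edge of $H$. This follows by a parity argument: the walk decomposes into $a$ complete subdivided-edge traversals, each of length $\ell+1$, plus some number of back-and-forth detours, each of even length; since the total length $\ell+1$ is odd, one is forced to have $a=1$ with no detours. Consequently, once one guarantees $\phi(V(G)) \subseteq V(H)$, the restriction $\phi|_{V(G)}$ is immediately the required homomorphism $G \to H$.

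The main obstacle is precisely to ensure $\phi$ sends original vertices of $G^{\ell}$ to original vertices of $H^{\ell}$. The plan to address this is to preprocess the reduction before subdividing: attach to each original vertex of both $G$ and $H$ a small distinguishing marker, such as a pendant odd cycle of length coprime to $\ell+1$, which after $\ell$-subdivision becomes rigid enough that any homomorphism of the augmented graphs must align markers of $G^{\ell}$ with markers of $H^{\ell}$, thereby pinning every vertex of $V(G)$ onto a vertex of $V(H)$. The technical heart of the proof is then to verify that the augmentation is faithful, i.e.\ that $G \to H$ holds if and only if the augmented $\ell$-subdivisions admit a homomorphism; this reduces to a direct case analysis ensuring that the chosen markers interact correctly with the parity structure of the subdivisions and introduce no spurious homomorphisms.
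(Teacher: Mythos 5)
Your high-level shape is right: subdivide both inputs uniformly, note the forward direction is easy, and observe that everything hinges on forcing branch vertices of $G^{\ell}$ to land on branch vertices of $H^{\ell}$. You are also right that this is the main obstacle: the bare reduction $(G,H)\mapsto(G^{\ell},H^{\ell})$ is genuinely false without extra structure (e.g.\ $C_3\not\rightarrow C_5$ yet $C_3^{2}=C_9\rightarrow C_{15}=C_5^{2}$), so whatever gadget you add must carry the entire proof. That gadget is exactly where your argument has a gap. A pendant odd cycle of length $c$ attached to $v\in V(G)$ becomes, after subdivision, an odd cycle of length $c(\ell+1)$; under a homomorphism its image need only contain an odd cycle of length at most $c(\ell+1)$, and such cycles exist in the augmented $H^{\ell}$ wherever $H$ has an odd cycle of length at most $c$ --- not only at the markers. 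Even in the best case where the marker is forced onto a marker cycle of the same length, a homomorphism between equal-length odd cycles is a rotation and need not send the attachment point of the $G$-marker to the attachment point of the $H$-marker, so $v$ may still land on a subdivision vertex. The ``direct case analysis'' you defer is therefore not a routine verification; as described, the markers do not pin anything.

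For contrast, the paper sidesteps the general problem entirely: it reduces from {\sc $3$-Colouring} restricted to connected graphs in which every edge lies in a triangle (trivially still \NP-complete), so the target is $H=K_3$ and $H^{5^{r}-1}=C_{3\cdot 5^{r}}$, a single long odd cycle. The triangles of $G$ play the role of your markers, but they are wrapped around every edge rather than pendant: each subdivided triangle, having total length equal to the circumference $3\cdot 5^{r}$, must wind exactly once around $C_{3\cdot 5^{r}}$, which forces every original vertex onto the sublattice $\{i,\,i+5^{r},\,i+2\cdot 5^{r}\}$; dividing out by $5^{r}$ then recovers a $3$-colouring. The lesson is that the rigidifying structure must constrain the images of the original vertices themselves (via a winding/counting argument in a cycle target), not merely certify their identity via an attached appendage. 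If you want to keep a general target $H$, you would need a substantially stronger gadget and a correctness proof for it; as written, the claimed equivalence for the augmented graphs is unestablished and the specific marker proposed does not achieve it.
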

\begin{proof}
    Let $r\geq 1$ be fixed. Let $G$ be connected and have the additional property that every edge is in a triangle. It is clear that this subset remains \NP-complete, \textcolor{black}{since one can enforce it by adding a new triangle at each edge.} We claim that:
    \[ G \rightarrow K_{3}  \mbox{ if and only if } G^{5^{r}-1} \rightarrow C_{3 \cdot 5^{r}}.\]
    \textcolor{black}{An example of this reduction appears in Figure~\ref{fig:new-figure-proof-one}.}
    The forward direction is trivial. Let us address the backward direction and let $h$ be a homomorphism from $G^{5^r-1}$ to $C_{3\cdot 5^{r}}$. Let $X$ be the set of vertices of $G^{5^r}$ that appear already in $G$. We claim that $h(X) \subseteq \{i,i+5^r,i+2\cdot 5^r\}$ for some $i \in [3\cdot 5^{r}]$ where addition is $\bmod\ 3 \cdot 5^{r}$. Suppose otherwise, then there is an edge $xy$ in $G$, so that the distance between $h(x)$ and $h(y)$ in $C_{3 \cdot 5^{r}}$ is $0<i <5^r$ (this is why we assumed $G$ to be connected). But this is impossible since we may consider there exists $z$ so that $xyz$ is a triangle in $G$ and this triangle must have been mapped to line in $C_{3 \cdot 5^{r}}$. Now, once we calculate $i$ and subtract it, we can divide by $5^r$ to use $h$ restricted to $X$ to give a homomorphism from $G$ to $K_3$.
\qed
\end{proof}
\begin{figure}
{\color{black}
\begin{center}
\resizebox{!}{0.2cm}{
$
\xymatrix{
\\
\bullet  \ar@{-}[rr] \ar@{-}[dd] & & \bullet \ar@{-}[dd] \\ 
\\
\bullet \ar@{-}[rr] & & \bullet \\
\\
}
$
\hspace{1cm}
$
\xymatrix{
& & \bullet \ar@{-}[dr] \ar@{-}[dl]\\
& \bullet  \ar@{-}[rr] \ar@{-}[dd] & & \bullet \ar@{-}[dd] \\ 
\bullet \ar@{-}[ur] \ar@{-}[dr] & & & & \bullet \ar@{-}[ul] \ar@{-}[dl] \\
& \bullet \ar@{-}[rr] & & \bullet \\
& & \bullet \ar@{-}[ur] \ar@{-}[ul] \\
}
$
}
\end{center}
\caption{Reduction from the proof of Lemma~\ref{lem:graph-hom-weak-C3}. This is what the $4$-cycle $C_4$ would be reduced to. The first step is to assume each edge is in a triangle. Finally, the edges will be substituted by paths of length $5^r$.}
\label{fig:new-figure-proof-one}
}
\end{figure}

\noindent

Note that 
Lemmas~\ref{l-dp89} and~\ref{lem:graph-hom-weak-C3} are 
enough to guarantee a dichotomy for $H$-subgraph-free graphs where $H$ is a single graph and not a finite set of graphs \cite{JMOPPSV}. 
However, as $C_3$-{\sc Colouring} (or equivalently, $3$-{\sc Colouring}) does not satisfy C2 due to 
Brooks' Theorem, we need to do more work to
accomplish the following.

\begin{lemma}
    {\sc Graph Homomorphism} satisfies C3.
    \label{lem:graph-hom-weak-C3-bis}
\end{lemma}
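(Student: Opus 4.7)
The plan is to adapt the argument of Lemma~\ref{lem:graph-hom-weak-C3} to subcubic inputs. The obstruction in that argument is the hypothesis that every edge of $G$ lies in a triangle, which forces a homomorphism $h\colon G^{5^r-1} \to C_{3\cdot 5^r}$ to factor through the index-$3$ coset structure of $C_{3 \cdot 5^r}$. A subcubic vertex cannot meet three triangles (that would require degree at least six), so this hypothesis is incompatible with subcubicity, and the proof of the earlier lemma does not carry over directly.

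My first step is to replace the target $K_3 = C_3$ with an odd cycle $C_g$ for a fixed odd $g \geq 5$. By~\cite{GHN00}, {\sc $C_g$-Colouring} is NP-hard on subcubic graphs. I would then establish the strengthening that this problem remains NP-hard on connected subcubic inputs $G$ in which every edge lies on a cycle of length~$g$. Such short odd cycles play the role that triangles played in Lemma~\ref{lem:graph-hom-weak-C3}, and subcubicity is compatible with a single $g$-cycle through each edge. The strengthening should follow either by a careful refinement of the hardness reduction in~\cite{GHN00} or by a local gadget argument that attaches a $g$-cycle to each edge without raising degrees above three.

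With this in hand, for any even $r \geq 2$ I would prove the analogue of the equivalence in Lemma~\ref{lem:graph-hom-weak-C3}: for $G$ in the above class,
\[
    G \to C_g \iff G^r \to (C_g)^r = C_{g(r+1)}.
\]
The forward direction is immediate. For the converse, let $h\colon G^r \to C_{g(r+1)}$ be a homomorphism. Since $r+1$ is odd, $g(r+1)$ is odd and $C_{g(r+1)}$ is a core. Each $g$-cycle of $G$ becomes a $g(r+1)$-cycle in $G^r$, and the restriction of $h$ to such a cycle is a homomorphism $C_{g(r+1)} \to C_{g(r+1)}$ and thus an automorphism, hence bijective. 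Consequently the $g$ original vertices of the cycle occupy positions separated by exactly $r+1$ in $C_{g(r+1)}$. Because every edge of $G$ lies on such a cycle and $G$ is connected, all images $h(v)$ with $v \in V(G)$ lie in a single coset of $(r+1)\mathbb{Z}/g(r+1)\mathbb{Z}$; subtracting the common residue and dividing by $r+1$ produces a homomorphism $G \to C_g$.

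Since both $G^r$ and $(C_g)^r$ are $r$-subdivisions of subcubic graphs and $r$ can be made arbitrarily large, this yields C3 in its strict form. The principal technical obstacle is the strengthened NP-hardness result: ensuring that every edge of $G$ lies on a $g$-cycle while keeping $G$ subcubic. This is the main content beyond Lemma~\ref{lem:graph-hom-weak-C3} and where I would expect most of the effort to be spent, as the coset-collapse argument then transfers almost verbatim from the non-subcubic setting.
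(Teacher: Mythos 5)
Your overall strategy --- reduce from \textsc{$C_g$-Colouring} on subcubic graphs for a fixed odd cycle $C_g$, subdivide both input graphs uniformly, and use the fact that odd cycles are cores to force the images of the original vertices into a single coset of $(r+1)\mathbb{Z}/g(r+1)\mathbb{Z}$ --- is the same strategy the paper uses (with $g=5$), and your coset-collapse step is sound for every edge that does lie on a $g$-cycle. The gap is exactly where you locate it, but it is not a routine technicality that can be deferred: you need \NP-hardness of \textsc{$C_g$-Colouring} on connected subcubic graphs in which \emph{every} edge lies on a $g$-cycle, and neither of your suggested routes delivers this. The local-gadget route fails outright: attaching a new $g$-cycle through an edge $uv$ adds one new edge at each of $u$ and $v$, so it cannot be applied to an edge whose endpoints already have degree $3$, which is the typical situation in a subcubic hardness instance. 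The ``careful refinement of \cite{GHN00}'' route is also problematic, because the instances produced by the self-reduction of Theorem~3.1 of \cite{GHN00} (each vertex becomes a chain of copies of $C_5$, with original edges joining tops of chains) have precisely the feature that the edges coming from the original graph do \emph{not} lie on any $C_5$; these exceptional edges are the whole difficulty.

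The paper's proof accepts that these exceptional edges exist and treats them by a different mechanism: it modifies the \cite{GHN00} reduction so that each original edge is represented in \emph{triplicate} between the two chains (degrees stay at most $3$ because the chains lengthen accordingly), and then, for a homomorphism $h$ from $G^{5^r-1}$ to $C_{5^{r+1}}$, argues that if one copy $x'y'$ of a triplicated edge were mapped to distance $c$ with $0<c<5^r$, the even cycles formed by pairs of copies would force the other two copies to distance $5^r-c$, after which the even cycle formed by those two remaining copies yields a contradiction. So the edges lying on a $C_5$ are handled by your core argument, and the exceptional edges by this parity argument. To complete your proof you would need either to supply the missing hardness result for the ``every edge on a $g$-cycle'' subcubic class (plausible, since cubic graphs with this property exist, but it requires a new reduction you have not given), or to adopt something like the paper's triplication device for the exceptional edges.
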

\begin{proof}
Let us recall the self-reduction from 
{\sc $C_5$-Colouring} to (subcubic) {\sc $C_5$-Colouring} in Theorem~3.1 from \cite{GHN00}. 
Each vertex with degree $d$ becomes a chain of $d$ $C_5$s, the $i+1$th connected to the $i$th by identifying edge $1$ on the former with edge $4$ on the latter (one may take any cyclic ordering of the edges). Then the $i$th occurrence of the vertex is taken to be the top vertex of the $i$th $C_5$ in the chain (where edge $2$ meets edge $3$). Now, one can simply join the $i$th occurrence of vertex $x$ to the $j$th occurrence of vertex $y$ if the edge $xy$ is the $i$th edge of $x$ and the $j$th edge of $y$. This is exactly the reduction of Theorem 3.1 from \cite{GHN00}. We amend if by pretending each edge $xy$ is in fact three edges and must be joined from the chain of $C_5$s representing $x$ to the chain of $C_5$s representing $y$ three times (this can at most result in a tripling of the degree). We assume that these joins are consecutive on each chain. \textcolor{black}{An example of this self-reduction appears in Figure~\ref{fig:new-figure-proof-bis}.}

Let $Y$ be the set of instances of 
{\sc $C_5$-Colouring} that can be obtained by this self-reduction. For $G \in Y$, all edges are in a $C_5$ except perhaps the edges that came from the edges in the original graph, which are now represented in $G$ in triplicate.
    Let $r\geq 1$ be fixed. Let $G \in Y$. We claim that:
    \[G \rightarrow C_{5}  \mbox{ if and only if } G^{5^r-1} \rightarrow C_{5^{r+1}}.\]
    The forward direction is trivial. Let us address the backward direction and let $h$ be a homomorphism from $G^{5^r}$ to $C_{5^{r+1}}$. Let $X$ be the set of vertices of $G^{5^r}$ that appear already in $G$. We claim that $h(X) \subseteq \{i,i+5^r,i+2\cdot 5^r,i+3\cdot 5^r,i+4\cdot 5^r\}$ for some $i \in [5^{r+1}]$ where addition is $\bmod\ 5^{r+1}$. Suppose otherwise, then there is an edge $xy$ in $G$ so that the distance between $h(x)$ and $h(y)$ in $C_{5^{r+1}}$ is $0<i <5^r$. But this is impossible for the edges in $G$ that were in a $C_5$ since we may consider there exists $z_1,z_2,z_3$ so that $xyz_1z_2z_3$ is a $C_5$ in $G$ and this $C_5$ must have been mapped to line in $C_{5^{r+1}}$. Suppose now it happens for an edge that is not in a $C_5$ and remember these edges come in triplicate. If $xy$ were the edge in the original graph, then consider that they became edges $x'y'$, $x''y''$, $x'''y'''$ in $G$. Now, w.l.o.g., if $h(x')$ and $h(y')$ are mapped at distance $0<c<5^{r}$ in $C_{5^{r+1}}$, then both $h(x'')$ and $h(y'')$, and $h(x''')$ and $h(y''')$ must be mapped to distance $5^r-c \bmod 5^r$ since there are even cycles involving $x'y'$ and $x''y''$, and $x'y'$ and $x'''y'''$, where all other edges were in a $C_5$. But now we consider that there is an even cycle involving $x''y''$ and $x'''y'''$, where all other edges were in a $C_5$, and derive a contradiction.
    
    Now, once we calculate $i$ and subtract it, we can divide by $5^r$ to use $h$ restricted to $X$ to give a homomorphism from $G$ to $C_5$.
    \qed
\end{proof}
\begin{theorem}
\label{thm:graph-hom-main}
    {\sc Graph Homomorphism} is a C123-problem.
\end{theorem}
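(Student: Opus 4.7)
The plan is to simply invoke the three preceding lemmas that establish each of C1, C2 and C3 for {\sc Graph Homomorphism} individually. Concretely, the lemma from \cite{DP89} gives C1 (polynomial-time solvability on classes of bounded pathwidth), the lemma from \cite{GHN00} gives C2 (hardness on the class of subcubic graphs), and the final lemma of the section gives C3 (hardness preserved under iterated edge subdivision of subcubic graphs). Since a problem is by definition a C123-problem precisely when it satisfies all three of these conditions, the theorem follows immediately by assembly, with no further content of its own.

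The only non-routine ingredient is the C3 lemma, which is the step where I would expect the real work to sit. The weaker variant already disposed of C3 without the subcubic restriction, by arranging for every edge of the source graph to lie in a triangle and then exploiting the $5^r$-coset structure of the long cycle $C_{3 \cdot 5^r}$ to recover, in the backward direction, the positions of the original vertices under any homomorphism of the subdivision. The main obstacle in upgrading that argument to genuine C3 is that the ``add a triangle to every edge'' reduction inflates vertex degrees well beyond three, so one cannot straightforwardly apply it on a subcubic input. I would expect the full C3 argument to replace those triangles by suitably chosen subcubic gadgets that still pin down the image of each original vertex modulo $5^r$ after subdivision, and to verify that the equivalence $G \to H \iff G^k \to H^k$ still survives for a single chosen $k$, which one then iterates to obtain the ``for every $j$ there is $\ell \geq j$'' clause in the definition of C3.

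Granted that lemma, the synoptic theorem is a one-line amalgamation: conditions C1, C2 and C3 each hold, so {\sc Graph Homomorphism} is a C123-problem, and in particular Theorem~\ref{t-dicho2} applies to give a full ${\cal S}$-versus-not dichotomy on finitely-bounded monotone classes.
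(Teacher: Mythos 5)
Your proposal is correct and matches the paper exactly: the theorem is a synoptic statement assembled from the three preceding lemmas (C1 from \cite{DP89}, C2 from \cite{GHN00}, and the C3 lemma), and the paper gives it without further proof for precisely this reason. Your anticipation of how the C3 lemma must work is also on target --- the paper's appendix proof indeed replaces the non-subcubic ``triangle on every edge'' trick by the subcubic $C_5$-chain gadgets of the self-reduction in \cite{GHN00}, recovering the original vertices' images modulo $5^r$ in $C_{5^{r+1}}$ --- but since that lemma is granted as a stated prerequisite, nothing further is needed for the theorem itself.
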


\section{Locally Constrained Homomorphisms} 
\label{sec:graph-local-homomorphism} 

In this section we consider three locally constrained homomorphism problems and show that all three of them are C23-problems, which become C123-problems after imposing a degree bound.

\problemdef{Locally Bijective Homomorphism}{A graph $G$ and a graph $H$}{Is there a locally bijective homomorphism from $G$ to $H$?}

\problemdef{Locally Injective Homomorphism}{A graph $G$ and a graph $H$}{Is there a locally injective homomorphism from $G$ to $H$?}

\problemdef{Locally Surjective Homomorphism}{A graph $G$ and a graph $H$}{Is there a locally surjective homomorphism from $G$ to $H$?}

\noindent
We will often use the abbreviations {\sc LBHom}, {\sc LSHom} and {\sc LIHom} for the three problems. We would also like to consider the bounded-degree versions of these problems, which we refer to as {\sc Degree-$d$-LBHom}, {\sc Degree-$d$-LSHom} and {\sc Degree-$d$-LIHom}, here we restrict the maximum degree of either $G$ or $H$ to be~$d$; note 
that we require only one of these two graphs to have bounded degree.

\begin{lemma}[\cite{CFHPT15}]
    For each $d$, the three problems {\sc Degree-$d$-LBHom}, {\sc Degree-$d$-LSHom} and {\sc Degree-$d$-LIHom} satisfy C1.
\end{lemma}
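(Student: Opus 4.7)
The plan is to solve each of the three problems by dynamic programming on a nice path decomposition of $G$ of width at most $k$, where $k$ is the constant bounding the pathwidth of the graph class. One may assume each bag contains at most $k+1$ vertices and that consecutive bags differ by a single introduce or forget operation. The algorithm sweeps the decomposition from left to right and at each bag $B_i$ maintains a table of feasible partial configurations.

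A configuration at $B_i$ consists of a candidate map $f_i\colon B_i\to V(H)$, together with, for every $u\in B_i$, a bookkeeping set $S_u\subseteq N_H(f_i(u))$ recording which neighbours of $f_i(u)$ in $H$ have already been realised as $f_i(v)$ for some previously processed neighbour $v$ of $u$ in $G$. At an introduce node adding $w$, we check that every edge $uw$ with $u\in B_i$ satisfies $f_i(u)f_i(w)\in E(H)$ and update each relevant $S_u$ (refusing duplicates in the injective or bijective variants). At a forget node removing $u$, every neighbour of $u$ in $G$ has by then been processed, so the local constraint can be finalised: for {\sc Degree-$d$-LIHom} we require $|S_u|=\deg_G(u)$; for {\sc Degree-$d$-LSHom} we require $S_u=N_H(f_i(u))$; and for {\sc Degree-$d$-LBHom} we require both.

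The crucial point is to bound the table size. The number of candidate maps per bag is at most $|V(H)|^{k+1}$, and a priori each $S_u$ ranges over $2^{|N_H(f_i(u))|}$ subsets. If $H$ has maximum degree at most $d$, this is $2^d$ and the total state count is polynomial in $|V(G)|+|V(H)|$ for fixed $k$ and $d$. If instead only $G$ has maximum degree $d$, a problem-specific argument is needed. For {\sc LBHom} and {\sc LSHom}, local surjectivity forces $\deg_H(f_i(u))\le\deg_G(u)\le d$, so we may discard in advance any candidate that maps some vertex of $B_i$ to a vertex of $H$ of degree exceeding $d$, and the previous bound applies. For {\sc LIHom}, $\deg_H(f_i(u))$ may be arbitrarily large, but $|S_u|\le\deg_G(u)\le d$, so we enumerate only the $O(|V(H)|^d)$ subsets of $N_H(f_i(u))$ of size at most $d$.

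The main obstacle to anticipate is exactly this last step: without the size-$d$ truncation the bookkeeping component of each state would be exponential, so the degree bound on $G$ must be exploited at the level of the auxiliary sets $S_u$, not only by pruning the codomain. Once this truncation is in place, the three checks at forget nodes are uniform, and a standard induction on the decomposition shows that the root table is non-empty if and only if the corresponding locally constrained homomorphism from $G$ to $H$ exists.
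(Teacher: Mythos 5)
The paper does not prove this lemma itself but imports it from \cite{CFHPT15}, and your dynamic programme over a (path) decomposition of $G$ --- tracking a candidate image for each bag vertex together with the set of already-realised $H$-neighbours, pruned via the degree bound on whichever of $G$ or $H$ is bounded --- is essentially a reconstruction of the algorithm in that reference, specialised from tree decompositions to path decompositions. Your argument is correct, including the key observation that for the locally injective case with only $G$ of bounded degree one must truncate the bookkeeping sets to size at most $d$ rather than bound the codomain degree.
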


\begin{lemma}[\cite{CFHPT15}]
  The three problems  {\sc Degree-$3$-LBHom}, {\sc Degree-$3$-LSHom}, {\sc Degree-$3$-LIHom} satisfy C2.
\end{lemma}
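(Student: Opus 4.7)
The plan is to exhibit, for each of the three locally constrained variants, a fixed cubic target graph $H$ and a polynomial-time reduction from a standard \NP-complete problem, producing a subcubic input graph $G$ whose image encodes the source instance.

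For \textsc{Degree-$3$-LBHom}, a locally bijective homomorphism is precisely a covering projection, and since such a cover forces $|N_G(v)| = |N_H(f(v))|$ at every vertex $v$, fixing a cubic target automatically confines attention to subcubic inputs. It therefore suffices to invoke one of the classical \NP-completeness results for $H$-cover with $H$ cubic, for instance $K_4$-cover or $K_{3,3}$-cover (Abello--Fellows--Stillwell, Kratochv\'il--Proskurowski--Telle). For \textsc{Degree-$3$-LIHom} the same degree-matching observation applies in the weaker form $|N_G(v)| \leq |N_H(f(v))|$, so fixing a cubic $H$ again forces $G$ to be subcubic in any yes-instance; here I would reduce from $3$-\textsc{Colouring} on subcubic graphs (\NP-complete by Garey--Johnson--Stockmeyer) into a suitable cubic $H$, attaching small vertex gadgets which restrict the admissible images at each vertex so that local injectivity on the neighbourhood of $v$ exactly encodes a proper $3$-colouring, while preserving subcubicity of $G$.

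For \textsc{Degree-$3$-LSHom} the degree-matching shortcut fails, since local surjectivity only bounds the image neighbourhood from below. Instead I would invoke the role-assignment \NP-completeness results of Fiala--Paulusma, selecting a small cubic target $H$ for which their reduction can be carried out on a subcubic source. When the standard reduction produces a high-degree input vertex, I would substitute a subcubic tree gadget whose leaves collectively realise all roles of $H$ at the original vertex's neighbourhood; the internal vertices are then arranged so that their own neighbourhoods also witness every role of $H$, thus preserving local surjectivity.

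The main obstacle is this last case: \textsc{Degree-$3$-LSHom} admits no degree-matching shortcut and requires explicit subcubic gadgets in which every internal vertex sees all roles of $H$ in its immediate neighbourhood. These constructions form the technical core of the argument in~\cite{CFHPT15}, which we invoke directly.\qed
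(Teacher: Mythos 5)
The paper offers no proof of its own here: the lemma is imported verbatim from \cite{CFHPT15}, and the only fact the paper later relies on (in the proof of Lemma~\ref{lem:local-C3}) is that all three problems are \NP-complete when $G$ is subcubic and $H=K_4$. Your sketch is broadly compatible with that, but it takes a more complicated route than necessary and leaves its two gadget constructions unverified. For \textsc{LBHom} your covering-projection argument with a fixed cubic target is exactly right. For \textsc{LSHom}, however, your claimed ``main obstacle'' dissolves once you use what C2 actually provides: the inputs are already restricted to subcubic $G$, and a locally surjective map from a neighbourhood of size at most $3$ onto a neighbourhood of size exactly $3$ (every vertex of $K_4$ has degree $3$) is automatically a bijection, so on subcubic inputs \textsc{LSHom} with target $K_4$ coincides with \textsc{LBHom} with target $K_4$, i.e.\ with $K_4$-cover. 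No role-assignment machinery or tree gadgets are needed, and the tree-gadget substitution you propose is precisely the kind of step that would require a correctness argument you do not supply. For \textsc{LIHom} the degree inequality does go the right way, but your reduction from $3$-\textsc{Colouring} via unspecified ``small vertex gadgets'' is a plan rather than a proof; the clean route is the known \NP-completeness of $K_4$-partial cover (equivalently, properly $4$-colouring the square of a subcubic graph), which is what \cite{CFHPT15} assembles. Since you end by invoking \cite{CFHPT15} for the technical core anyway, the proposal is not wrong, but the only load-bearing step is that citation, and the uniform choice $H=K_4$ lets all three cases follow from covering and partial-covering hardness without the detours.
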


In particular {\sc LBHom}, {\sc LSHom} and {\sc LIHom} remain \NP-complete where $G$ is subcubic and $H$ is $K_4$ \cite{CFHPT15}.

\begin{lemma}
    {\sc Locally Bijective Homomorphism} satisfies C3.
\end{lemma}
\begin{proof}
    We claim {\sc LBHom} is \NP-complete for $r$-subdivisions of subcubic graphs for any integer $r$. Let $X$ be the set of vertices of $G^r$ that also appear in $G$ and let $Z$ be the vertices of $K_4^r$ that also appear in $K_4$. We claim:
        \[G \xrightarrow{\text{B}} K_4 \mbox{ if and only if } G^{r} \xrightarrow{\text{B}} K_4^{r}.\]
    The forward direction is trivial. Let us consider the backward direction, let $h_b$ be a locally bijective homomorphism from $G^r$ to $K_4^r$, we claim $h_b(v) \in Z$ if, and only if $v \in X$. As $h_b$ is locally bijective, degree must be preserved, meaning $h_b(v) \in Z$ if, and only if $v$ has degree 3 therefore by showing all vertices in $X$ must have degree 3 our claim is proven.

    We may assume there exists at least one degree~$3$ vertex $v \in G^r$, let $P=(v,p_1,p_2, \ldots, p_{r-1}, p_{r})$ be an arbitrary path of length $r$ from $v$. Both $v$ and $p_{r}$ must be in $X$ with all intermediate vertices having degree 2. Similarly for any path of length $r$ from $h_b(v)$ as $h_b(v) \in Z$ all intermediate vertices must have degree 2 with that at distance $r$ in $Z$. As $p_1$ must be mapped to some neighbour of $h_b(v)$, both $p_1$ and $h_b(p_1)$ have degree 2. Now as $p_2$ cannot map to $h_b(v)$ it must map to the next vertex on a path away from $h_b(v)$. As both $p_{i-1}$ and $h_b(p_{i-1})$ have degree 2 it follows that $h_b(p_i)$ must have distance $i$ from $h_b(v)$. Inductively it follows that $h_b(p_{r})$ must have distance $r$ from $h_b(v)$ implying $h_b(p_{r}) \in Z$ and therefore $p_{r}$ has degree 3. As this holds for all vertices distance $r$ from a degree 3 vertex it follows that all vertices in $X$ must have degree 3 meaning $h_b$ can be used restricted to $X$ to give a homomorphism from $G$ to $K_4$.
\qed
\end{proof}
\begin{figure}
{\color{black}
\[
\resizebox{!}{0.06cm}{
\xymatrix{
& & & \bullet \ar@{-}[d] \ar@{-}[rr] & & \ar@{-}[d] \ar@{-}[rr] & & \ar@{-}[d] \ar@{-}[rr] & & \ar@{-}[d] \ar@{-}[rr] & & \ar@{-}[d] \ar@{-}[rr] & & \ar@{-}[d] \ar@{-}[rr] & & \ar@{-}[d] & & \\
& & & \bullet \ar@{-}[dr] & & \bullet \ar@{-}[dl] \ar@{-}[dr] & & \bullet  \ar@{-}[dl] \ar@{-}[dr] & & \bullet  \ar@{-}[dl] \ar@{-}[dr] & & \bullet  \ar@{-}[dl] \ar@{-}[dr] & & \bullet  \ar@{-}[dl] \ar@{-}[dr] & & \bullet \ar@{-}[dl] \\
& & & & \bullet & & \bullet & & \bullet & & \bullet & & \bullet & & \bullet & & \\
\bullet \ar@{-}[dd] \ar@{-}[r] & \bullet \ar@{-}[dr] & & & & & & & & & & & & & & & & \bullet \ar@{-}[r] \ar@{-}[dl] & \bullet \ar@{-}[dd] \\
& & \bullet \ar@{-}[uurr] & & & & & & & & & & & & & & \bullet \ar@{-}[uull] \\
\bullet \ar@{-}[r] & \bullet \ar@{-}[ur] & & & & & & & & & & & & & & & & \bullet \ar@{-}[r] \ar@{-}[ul] & \bullet  \\
& & \bullet \ar@{-}[ul] \ar@{-}[uuuurrrr] & & & & & & & & & & & & & & \bullet \ar@{-}[ur] \ar@{-}[uuuullll] \\
\bullet \ar@{-}[uu] \ar@{-}[r] & \bullet \ar@{-}[ur] & & & & & & & & & & & & & & & & \bullet \ar@{-}[r] \ar@{-}[ul] & \bullet \ar@{-}[uu] \\
& & \bullet \ar@{-}[ul] \ar@{-}[uuuuuurrrrrr]  & & & & & & & & & & & & & & \bullet \ar@{-}[ur] \ar@{-}[uuuuuullllll]  \\
\bullet \ar@{-}[uu] \ar@{-}[r] & \bullet \ar@{-}[ur] & & & & & & & & & & & & & & & & \bullet \ar@{-}[r] \ar@{-}[ul] & \bullet \ar@{-}[uu] \\
& & \bullet \ar@{-}[ul]  \ar@{-}[ddddddrrrrrr] & & & & & & & & & & & & & & \bullet \ar@{-}[ur]  \ar@{-}[ddddddllllll] \\
\bullet \ar@{-}[uu] \ar@{-}[r] & \bullet \ar@{-}[ur] & & & & & & & & & & & & & & & & \bullet \ar@{-}[r] \ar@{-}[ul] & \bullet \ar@{-}[uu] \\
& & \bullet \ar@{-}[ul] \ar@{-}[ddddrrrr] & & & & & & & & & & & & & & \bullet \ar@{-}[ur]  \ar@{-}[ddddllll]\\
\bullet \ar@{-}[uu] \ar@{-}[r] & \bullet \ar@{-}[ur] & & & & & & & & & & & & & & & & \bullet \ar@{-}[r] \ar@{-}[ul] & \bullet \ar@{-}[uu] \\
& & \bullet \ar@{-}[ul] \ar@{-}[ddrr] & & & & & & & & & & & & & & \bullet \ar@{-}[ur] \ar@{-}[ddll]\\
\bullet \ar@{-}[uu] \ar@{-}[r] & \bullet \ar@{-}[ur] & & & & & & & & & & & & & & & & \bullet \ar@{-}[r] \ar@{-}[ul] & \bullet \ar@{-}[uu] \\
& & & & \bullet & & \bullet & & \bullet & & \bullet & & \bullet & & \bullet & & \\
& & & \bullet \ar@{-}[ur] & & \bullet \ar@{-}[ul] \ar@{-}[ur] & & \bullet  \ar@{-}[ul] \ar@{-}[ur] & & \bullet  \ar@{-}[ul] \ar@{-}[ur] & & \bullet  \ar@{-}[ul] \ar@{-}[ur] & & \bullet  \ar@{-}[ul] \ar@{-}[ur] & & \bullet \ar@{-}[ul] \\
& & & \bullet \ar@{-}[u] \ar@{-}[rr] & & \ar@{-}[u] \ar@{-}[rr] & & \ar@{-}[u] \ar@{-}[rr] & & \ar@{-}[u] \ar@{-}[rr] & & \ar@{-}[u] \ar@{-}[rr] & & \ar@{-}[u] \ar@{-}[rr] & & \ar@{-}[u]  & & \\
}
}
\]
\caption{Self-reduction at the start of the proof of Lemma~\ref{lem:graph-hom-weak-C3-bis}. This is what the $4$-cycle $C_4$ would be reduced to.}
\label{fig:new-figure-proof-bis}
}
\end{figure}
\begin{lemma}
    {\sc Locally Surjective Homomorphism} satisfies C3.
    \label{lem:local_s-C3}
\end{lemma}
\begin{proof}
\medskip
\noindent
As with the bijective case let $X$ be the set of vertices of $G^r$ that also appear in $G$ and let $Z$ be the vertices of $K_4^r$ that also appear in $K_4$. Now we claim:

\medskip
\noindent
$G \xrightarrow{\text{S}} K_4 \mbox{ if and only if } G^{r} \xrightarrow{\text{S}} K_4^{r}$.

Again the forward direction is trivial and let us consider the backward direction, let $h_s$ be a locally surjective homomorphism from $G^r$ to $ K_4^r$, we claim $h_s(v) \in Z$ if and only if $v \in X$. In the surjective case, the degree of a vertex of $v \in G^r$ can be greater than that of $h_s(v)$, although the inverse cannot be true thus implying that only vertices in $X$, specifically those with degree 3, can be mapped to a vertex in $Z$.

Now consider the other direction showing if $v \in X$ then $h_b(v) \in Z$, assume for contradiction, $v$ were mapped to some vertex not in $Z$. As all vertices in $G^r$ with degree 3 must be in $X$, the length of the shortest path from $v$ to the closest degree 3 vertex must be $\geq r$. Let $x$ be this degree 3 vertex and $P=(v, p_1, p_2,\ldots,x)$ be the path from $v$ to $x$. In addition, as $h_s(v) \notin Z$ there must be exactly two vertices, $z, z' \in Z$, with distance $< r$ from $h_s(v)$ with every path of length $r$ from $h_s(v)$ containing one of these. We claim the shortest path from $v$ to $x$ must map to a path from $h_s(v)$ via $z$ or $z’$, note this leads to a contradiction as a degree 2 vertex cannot be mapped to a vertex with degree 3.

First consider the vertex $p_2$, necessarily $h_s(p_2)$ is some neighbour of $h_s(v)$. If $p_2$ maps to either $z$ or $z’$ our claim is proven, else, $h_b(p_2)$ must have degree 2. Each neighbour of $h_s(p_2)$ must be mapped to at least one neighbour of $p_2$, as $p_2$ has degree 2, $p_3$ must map to a vertex distance 2 from $v$ given it cannot map to $h_b(v)$. It then follows inductively that while $i<r$ and $h_s(p_i) \notin \{z,z’\}$, both $p_i$ and $h_b(p_i)$ have degree 2, meaning $h_s(p_{i+1})$ must have distance $i$ from $h_s(v)$. As every path of length $r$ from $h_s(v)$ must go via either $v$ or $v’$ we once again have a contradiction. As $h_s(v) \in Z$ if and only if $v \in X$ we can again use $h_s$ restricted to $X$ to give a homomorphism from $G$ to $K_4$.
\qed
\end{proof}

\begin{lemma}
    {\sc Locally Injective Homomorphism} satisfies C3.
    \label{lem:local_i-C3}
\end{lemma}
\begin{proof}
\medskip
\noindent
Similarly to {\sc LBHom} and {\sc LSHom}, we claim:

\medskip
\noindent
$G \xrightarrow{\text{I}} K_4 \mbox{ if and only if } G^{r} \xrightarrow{\text{I}} K_4^{r}$.

Again, the forward direction is trivial. For the backward direction, let $h_i$ be a locally injective homomorphism from $G^r$ to $ K_4^r$, this means the degree of $v$ can be less than that of $h_i(v)$ but the inverse cannot be true. We will again show that $h_i(v) \in Z$ if and only if $v \in X$. To aid in this we make a second claim, given two vertices distance $r$ from each other, call these $v$ and $v^{r}$, then $v^{r}$ maps to some vertex in $Z$ if and only if $v$ maps to a vertex in $Z$.

Let $P = (v,p_1,p_2,\ldots ,p_{r-1}, v^{r})$ be the shortest path between $v$ and $v^{r}$. First consider where $h_i(v) \in Z$, given any path of length $r$ from $h_i(v)$, all intermediate vertices must have degree 2 and the vertex at distance $r$ must also be in $Z$. Showing $P$ must map to a path of length $r$ from $h_i(v)$ therefore implies $h_i(v^{r}) \in Z$.

$p_2$ must map to some neighbour of $h_i(v)$, as all neighbours of $h_i(v)$ have degree 2, $p_2$ must have degree $\leq 2$ and as it lies on a path it must have degree 2. Then as no two neighbours of $p_2$ can map to the same neighbour of $h_i(p_2)$, the vertex $p_3$ must map to a vertex distance 2 from $v$. Inductively it follows, where $i<r$, if $h_i(p_i)$ has distance $i-1$ from $h_i(v)$, given $h_i(p_i)$ and therefore also $p_i$ must have degree 2, $h_i(p_{i+1})$ has distance $i$ from $h_i(v)$. Thus $P$ maps to a path of length $r$ from $h_i(v)$, implying $h_i(v^{r}) \in Z$.

Similarly, if $h_i(v) \notin Z$ a path of length $r$ from $h_i(v)$, is made up of two paths with length $< r$. One from $h_i(v)$ to some vertex $z_j \in Z$ and one away from $z_j$, thus any vertex at distance $r$ cannot be in $Z$. Given that both paths contain only degree 2 vertices, the reasoning from the previous case holds. Say $z_j$ has distance $l$ from $h_i(v)$, the first $l$ vertices of our path must map a path of length $l$ from $h_i(v)$ with the remaining path of length $r-l$ mapping to a path of length $r-l$ from $v_j$. This therefore proves our claim.

Given this holds for any two vertices in $G^r$ distance $r$ from one another, this must also hold for any multiple of $r$. If two vertices $v, w$ have distance $k \cdot r$ from each other, for some integer $k$, then $h_i(v) \in Z$ if, and only if $h_i(w) \in Z$. We can now prove our main claim, $h_i(v) \in Z$ if and only if $v \in Z$.

Say some vertex $v \in X$ were mapped to some $h_i(v) \notin Z$, we may assume $v$ has degree $\leq 2$ as it is mapped to a vertex with degree 2. As we assume there exists at least one degree 3 vertex in $G^r$, let $x$ be that closest to $v$. As both $v$ and $x$ are in $X$ they must have distance some multiple of $r$. This implies $h_i(x) \notin Z$, however, this leads to a contradiction as a degree 3 vertex cannot be mapped to a vertex with degree 2.

In the other direction, assume some vertex $v \notin X$ were mapped to some vertex in $Z$, again let $x$ be the closest degree 3 vertex to $v$. In addition, on the shortest path from $v$ to $x$, let $v'$ be the vertex closest to $x$ such that its distance from $v$ is some multiple of $r$. As $h_i(v') \in Z$ and the distance from $v’$ to $x$ is less than $r$, $h_i(x)$ must lay on the path between two vertices in $Z$. As previously this path from $v'$ to $x$ must map to a path of the same length from $h_i(v')$ to $h_i(x)$. This means $x$ cannot map to a vertex in $Z$ which again leads to a contradiction as $x$ has degree 3 and $h_i(x)$ has degree~$2$.
\qed
\end{proof}

\begin{theorem}
    {\sc Locally Bijective Homomorphism}, {\sc Locally Surjective Homomorphism} and {\sc Locally Injective Homomorphism} are C23-problems.
\end{theorem}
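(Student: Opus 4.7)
The plan is to observe that the theorem is essentially a bookkeeping consequence of the three preceding lemmas: all of C2 and C3 for the unrestricted locally constrained problems has in fact been handled, so the proof reduces to combining them. Writing out the two constituent checks, the task is to (i) promote the C2 statement from the degree-$3$ versions to the unrestricted versions, and (ii) cite Lemma~\ref{lem:local-C3} verbatim.

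For step (i), the key observation is that the condition C2 only concerns inputs drawn from the class of subcubic graphs. If we restrict \textsc{LBHom}, \textsc{LSHom}, \textsc{LIHom} to pairs $(G,H)$ in which both $G$ and $H$ are subcubic, then every such pair is already a legitimate instance of the Degree-$3$ variant (the degree bound is automatic on the subcubic restriction of either coordinate). Hence the computational hardness for the Degree-$3$ version on subcubic graphs, asserted by the lemma of~\cite{CFHPT15} quoted above, transfers without modification to the unrestricted versions on subcubic graphs. This gives C2 for all three problems.

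For step (ii), Lemma~\ref{lem:local-C3} directly states that \textsc{Locally Bijective Homomorphism}, \textsc{Locally Surjective Homomorphism} and \textsc{Locally Injective Homomorphism} satisfy C3. Since the definition of a C23-problem in this paper requires exactly C2 and C3 (with C1 typically failing, but not demanded here), combining the two steps concludes the proof.

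I do not expect any obstacles: the content of the proof has been absorbed into the three lemmas preceding the statement. The only subtlety worth flagging is the direction of the implication in step (i): we are using that hardness on a \emph{smaller} family of instances (subcubic-to-subcubic, with the degree bound baked in) implies hardness on the \emph{larger} family (subcubic-to-subcubic, without insisting on the degree bound), which is immediate because the instance class has only grown. No new reductions or subdivision arguments are required beyond those already used to establish Lemma~\ref{lem:local-C3}.
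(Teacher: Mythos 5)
Your proposal is correct and matches the paper's (implicit) argument: the theorem is a synoptic statement assembled from the preceding lemmas, with C3 given by Lemma~\ref{lem:local-C3} and C2 inherited from the Degree-$3$ variants because on subcubic instances the degree bound is automatic, so the restricted and unrestricted problems coincide there. No further argument is needed.
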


\begin{theorem}
   {\sc Degree-$3$-LBHom}, {\sc Degree-$3$-LSHom} and {\sc Degree-$3$-LIHom} are C123-problems.
\end{theorem}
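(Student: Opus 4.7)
The plan is to glue together the three preceding lemmas; no genuinely new reduction is needed, only a compatibility check that the C3 construction respects the degree-$3$ restriction.

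First, C1 is immediate by specializing the first lemma of this section to $d=3$: the three Degree-$3$ variants are all polynomial-time solvable on classes of bounded pathwidth. Second, C2 is delivered verbatim by the second lemma of this section, which is already stated precisely for the Degree-$3$ incarnations.

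For C3, I would invoke Lemma~\ref{lem:local-C3} together with the following easy observation: edge subdivision cannot increase the maximum degree of a graph, since the endpoints of a subdivided edge retain their original degree while every newly introduced vertex has degree exactly $2$. Hence, starting from a subcubic pair $(G,H)$ and applying the uniform $k$-subdivision operation to both sides produces a subcubic pair $(G^k,H^k)$. Because the Degree-$3$ restriction only bounds the degree on one of the two inputs, and Lemma~\ref{lem:local-C3} already transports hardness through (repeated) uniform subdivision on the unrestricted problem, the hardness passes directly to the Degree-$3$ restrictions. Combining the three facts yields C123 for each of {\sc Degree-$3$-LBHom}, {\sc Degree-$3$-LSHom} and {\sc Degree-$3$-LIHom}.

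The main, and relatively mild, obstacle I anticipate is bookkeeping at the seam between C2 and C3. I need to confirm that the subcubic hard instances supplied by the second lemma are exactly the kind of starting instances fed into Lemma~\ref{lem:local-C3}, so that after subdivision the resulting pair $(G^{\ell},H^{\ell})$ is still subcubic on the restricted side and thus witnesses hardness inside the Degree-$3$ class, not just inside the unrestricted class. Provided Lemma~\ref{lem:local-C3} is proved by iterated edge subdivision of an arbitrary hard subcubic family — the canonical template of the framework — this is automatic, and the theorem follows.
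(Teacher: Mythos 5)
Your proposal is correct and matches the paper's (implicit) argument: the theorem is one of the paper's ``synoptic'' statements, intended to follow by combining the two cited lemmas from \cite{CFHPT15} with Lemma~\ref{lem:local-C3}. Your compatibility check at the C2/C3 seam indeed goes through, since the proof of Lemma~\ref{lem:local-C3} starts from instances with $G$ subcubic and $H=K_4$ (itself of maximum degree $3$), and subdivision preserves subcubicity on both sides.
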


\section{Sequential $3$-Colouring Construction Game, QCSP$(K_3)$}\label{s-seq}

We prove that QCSP$(K_3)$ remains PSPACE-complete for graphs of bounded pathwidth. Afterwards, we do the same for the {\sc Sequential $3$-Colouring Construction Game}. 
We first formally define the former problem.

\problemdef{QCSP$(K_3)$}{A sentence $\phi$ of the form $Q_1 x_1 Q_2 x_2 \ldots Q_n x_n \ \Phi$, where $Q_i \in \{\forall,\exists\}$ and $\Phi$ is a conjunction of atoms involving the edge relation $E$.}{Is $\phi$ true on $K_3$?}

\noindent
{\sc QCSP$(K_3)$} is sometimes known as \emph{Quantified $3$-Colouring} and as highlighted previously closely relates to the {\sc Sequential 3-Colouring Construction Game} proposed by Bodlaender in \cite{Bo91}. The two key differences between the problems lie in the requirement of strict alternation in players and each player must assign a colour not previously assigned to a neighbour, in particular, we will show the hardness on bounded pathwidth is preserved between problems.

\begin{theorem}\label{t:bounded-3-col-pspace}
    {\sc QCSP$(K_3)$} is PSPACE-complete for graphs of bounded pathwidth.
\end{theorem}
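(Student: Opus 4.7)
The plan is to reduce from \textsc{Quantified Boolean Formulas (QBF)} in conjunctive normal form, which Atserias and Oliva~\cite{AO14} showed is Pspace-complete on inputs of bounded pathwidth. Given a QBF sentence $\Phi = Q_1 x_1 \cdots Q_n x_n\,\Psi$, I would build an equivalent \textsc{QCSP}$(K_3)$-instance whose underlying graph has pathwidth bounded by a constant multiple of that of $\Phi$.

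The backbone is the familiar palette-and-clause construction used to reduce \textsc{3-SAT} to \textsc{3-Colouring}, adapted to the quantified setting. Introduce three outermost existential vertices $T, F, B$ pairwise joined by edges; the existential player must colour them with the three distinct colours of $K_3$, and we name the colours accordingly. For each existentially quantified QBF variable $x$, at the same prefix position introduce a pair of existential vertices $v_x, v_{\neg x}$ with edges $E(v_x, v_{\neg x})$, $E(v_x, B)$ and $E(v_{\neg x}, B)$, forcing $\{v_x, v_{\neg x}\} = \{T,F\}$. For each clause $C$ with literal-vertices $v_{\ell_1}, v_{\ell_2}, v_{\ell_3}$, place the standard constant-size clause gadget on existentially quantified vertices at the innermost level, designed so that an extension to a proper $K_3$-colouring exists if and only if at least one $v_{\ell_i}$ carries colour $T$.

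The delicate step is the universal case: a QCSP universal vertex ranges over three colours rather than two, and forcing it off $B$ by an edge to $B$ would simply be defeated by the adversary choosing colour $B$ and trivially falsifying the sentence. Instead, for each universal $x$, quantify a universal vertex $u_x$ at position $i$ and immediately afterwards existentially quantify $v_x$ with the two edges $E(v_x, B)$ and $E(v_x, u_x)$. When $u_x \in \{T,F\}$ these constraints uniquely determine $v_x$ as the remaining Boolean colour; when $u_x = B$ the existential player may freely pick $v_x \in \{T,F\}$. A short case analysis then shows that the constructed sentence evaluates to true on $K_3$ exactly when both Boolean assignments to $x$ are winning for the original existential player, since the two cases $u_x \in \{T,F\}$ enforce both Boolean values and the case $u_x = B$ is already implied by their conjunction. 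Negated occurrences of a universal $x$ in a clause are handled by an auxiliary existential $v_{\neg x}$ with edges $E(v_{\neg x}, B)$ and $E(v_{\neg x}, v_x)$.

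For the pathwidth bound, take a path decomposition of width $k$ of the primal (or incidence) graph of $\Phi$, add $\{T, F, B\}$ to every bag (contributing $3$), replace each QBF-variable vertex inside the bags by its constant-size variable gadget, and enlarge every bag containing a clause-vertex with that clause's constant-size gadget; this is legal because the incidence decomposition already collects all literal-vertices of a clause in some common bag. The resulting width is $O(k)$, uniformly in the input. The main obstacle is the correctness of the universal-variable gadget: verifying that the ``free'' branch $u_x = B$ can never let the existential player cheat a losing instance into a winning one, and that the innermost existential layer of clause-gadget vertices does not subvert the quantifier alternation. Once this local reasoning is in place, Pspace-hardness of \textsc{QCSP}$(K_3)$ on graphs of bounded pathwidth follows directly from Atserias--Oliva, while Pspace-membership is immediate from the standard alternating quantifier algorithm.
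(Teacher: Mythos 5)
Your proposal is correct and follows essentially the same route as the paper: both reduce from the Atserias--Oliva result that QBF in CNF is Pspace-complete on bounded pathwidth, both neutralise the universal vertex's third colour by having an immediately following existential vertex whose ``free'' branch is subsumed by the two Boolean branches, and both bound the pathwidth by inserting the constant palette vertices into every bag and charging each constant-size clause gadget to a bag in which that clause's literals already co-occur. The only cosmetic difference is that the paper first passes through Quantified NAE-3-SAT (chaining long clauses into ternary NAE constraints and realising each as a triangle attached to the literal vertices), whereas you apply the classical 3-SAT-to-3-Colouring OR-gadget directly; since each clause has at most $w+1$ literals for constant $w$, both encodings keep the gadgets of constant size.
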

\begin{proof}
We will reduce from the PSPACE-complete problem Quantified Boolean Formulas (QBF) which was shown by Atserias et al. \cite{AO14} to remain hard where the input is in CNF and the path-width of the primal constraint graph is constant.

Let $\phi = Q_1 x_1 Q_2 x_2 \ldots Q_n x_n \Phi$ be an instance of QBF where $\Phi$ is a CNF formula. The 
{\it primal graph} of $\Phi$, $G(\Phi)$, contains a vertex for each variable of $\Phi$ and an edge where two variables appear in a clause together. Let the pathwidth of $G(\Phi)$ be a constant $w$. Each clause may then have length at most $w+1$, as every clause must be contained as a clique in some bag.

To reduce from QBF to {\sc QCSP$(K_3)$} we will construct an intermediate instance of {\sc Quantified Not-All-Equal 3-Satisfiability} (QNAE3SAT), $\phi'$. Each clause $C_i$ of $\Phi$ is replaced by a set of \textit{NAE} relations $C'$ in $\Phi$. Introducing a constant False it is clear that $\mathit{NAE} (C_i, F)$ is satisfied if, and only if $C_i$ is satisfied. Now it remains to ensure that there are three variables in each clause. Given $C_i$ of $\Phi$ contains literals $l_1,\ldots,l_k$, $k \leq w$ we create $k-2$ new existentially quantified variables $q_{i,j}$, $1 \leq j \leq k-2$. We can now define $C'$ as follows, if $k = 1$ let $C'=\mathit{NAE}(l_1,F,F)$, where $k=2$ let $C'=\mathit{NAE}(l_1,l_2,F)$, otherwise where $k \geq 3$, $C' = \mathit{NAE}(l_1,l_2,q_1), \mathit{NAE}(\overline{q_1},l_3,q_2)$, \ldots, $\mathit{NAE}(\overline{q_{k-3}},l_{k-1},q_{k-2}), \mathit{NAE}(\overline{q_{k-2}},l_k,F)$.

We now claim $C'_i$ is satisfied by an assignment of variables if, and only if $C_i$ is satisfied by this same assignment, thus given a winning strategy for Existential in $\phi$ which satisfies $C_i$, this also gives a winning strategy for Existential in $\phi'$ satisfying all clauses in $C'_i$. Take some winning strategy for Existential in $\phi$, at least one literal in $C_i$ must be evaluated as True. If $l_j$ is the first such literal, $l_q \forall 1 \leq q < j$ are evaluated as False. If $j \in \{1,2\}$, variables $q_1\ldots q_{k-2}$ can be assigned False which satisfies all later clauses as each clause contains a positive and negative appearance of a variable. If $j \geq 3$ we can assign $q_1$,\ldots, $q_{j-2}$ True and $q_{j-1}$,\ldots, $q_{k-2}$ False, $C'_{i,1}$ is satisfied as neither $l_1$ nor $l_2$ are assigned True and all other clauses contain either $F$ or a $q$ variable and a negated $q$ variable. If $j < k$ then $C'_{i,j-1}$ is satisfied as $l_j$ is assigned True with $\overline{q_{j-2}}$ and $q_{j-1}$ evaluated to 0. $C'_{i,k-1}$ is satisfied as $\overline{q_{k-2}}$ is assigned True and $F$ must be evaluated to False. 

Now assuming all variables clauses in $C'$ are satisfied we claim at least one literal must be assigned True. Assume otherwise, $q_1$ must be assigned True to satisfy $C'_{i,1}$, now $q_2$ must be assigned True to satisfy $C'_{i,2}$, it follows that all $q$ variables must be assigned True. However, this means $C'_{i,k-1}$ cannot be satisfied, thus leading to a contradiction.

From $\phi'$ we construct an instance of {\sc QCSP$(K_3)$}, $\phi''$. We refer to the variable of $\phi''$ as vertices and the vertices of $K_3$ as $\{1,2,3\}$. With the exception of $F$ every variable $x_i$ in $\phi'$ is replaced by a path of vertices $x_i$, $y_i$, $z_i$ in $\phi''$. We introduce a new vertex $W$ which is made adjacent to $F$ alongside all $x$ and $y$ variables. For each clause $C'_p \in \Phi'$ we also introduce a $K_3$ with each vertex corresponding to a literal of the clause. Consider where the first literal of $C'_p$ is a positive appearance of $v_i \in \phi'$ then the first vertex of $K_3$ will be adjacent to $x_i \in \phi''$. If this literal were $\overline{x_i}$ then the same vertex in $K_3$ would be adjacent to $y_i$. 

All variables in $\phi''$ are existentially quantified except for $z_i$ which follows the quantification of $v_i \phi'$ and the linear ordering of the vertices begins with $W$ and $F$, followed by $z$ vertices which follow the ordering of $x_i \in \phi$, the ordering of the remaining variables does not matter as they share the same quantification. The quantifier prefix is therefore $\exists \exists Q_1 Q_2 \ldots Q_n \exists^*$.

\textcolor{black}{We defer until Lemma~\ref{lem:deferred} the proof that $\phi''$ has bounded pathwidth.}

$(\phi' \to \phi'')$.
Suppose $\phi'$ is a positive instance, that is we can define a winning strategy for Existential that wins over any strategy of Universal, we can map this to a winning strategy for Existential in $\phi''$. Without loss of generality, $W$ can be coloured 3 and $F$ coloured 1, $x$ and $y$ variables must therefore be coloured either 1 or 2. We can now map assignments in $\phi'$ to a colouring $\phi''$. If a universal variable $z_i$ of $\phi''$ is coloured 3, $y_i$ can be coloured 1 and $x_i$ coloured 2; otherwise, $x_i$ can be coloured the same as $z_i$ with $y_i$ assigned the single remaining colour available to it. This allows us to map a colouring of the variables $x_i, y_j, z_i$ associated with a universal vertex $z_i$ to a strategy played by Universal in $\phi'$.

If Universal assigns $x_i$ the colour 1 in $\phi''$ we map this to a strategy such that $x_i$ is assigned False in $\phi$ by Universal, similarly if $x_i$ coloured 2 in $\phi''$ we map to a strategy where $x_i$ is assigned True. Given this assignment of universal variables, we colour the remaining existential variables according to the winning existential strategy in $\phi'$ mapping the strategy to colours as done with the universal variables. Finally, we need to show we can colour each of the $K_3$s. As the colouring of variables maps to a winning strategy of QNAE3SAT, at most two literals in a clause can have the same assignment meaning at most two vertices in the triangle have a neighbour with the same colour. Thus using all three colours we can colour the triangle.

($\phi'' \to \phi'$.)
Now suppose we have a winning strategy for Existential in $\phi''$, we will again translate this into a winning strategy for Existential in $\phi'$. We can assume $W$ and $F$ are coloured 3 and 1, respectively, meaning variables $x$ and $y$ variables must be coloured 1 or 2. 
For any assignment of universal variables in $\phi'$ we can map this to a strategy in $\phi''$. We can now as before read off the existential variables, in particular our values of $x$. For each clause, all vertices of the $K_3$ must be coloured differently, meaning at most two of the vertices adjacent to the $K_3$ may have the same colour. This means each clause in $\phi'$ must be satisfied giving us a winning strategy.

\qed
\end{proof}

\noindent We remind the reader that the path-width of a formula is equal to the path-width of the primal graph of its quantifier-free part.

\begin{lemma}
    The path-width of 
    $\Phi''$ is at most $9w+2$.
    \label{lem:deferred}
\end{lemma}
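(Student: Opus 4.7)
The plan is to transform an optimal path decomposition $(P,(B_i)_{i\in P})$ of the primal graph of $\Phi$, in which each bag satisfies $|B_i|\le w+1$, directly into a path decomposition of the primal graph of $\Phi''$ by a bounded, local expansion at each bag. The construction of $\Phi''$ from $\Phi$ replaces every variable $x$ of $\Phi$ by a fixed-size set $\Gamma(x)$ of new variables (of size at most nine, coming from the $K_3$-colouring gadget that encodes the truth value of $x$ together with the auxiliary vertices used to simulate the propositional connectives and the alternation), and it additionally uses a small constant-size set $R$ of global reference variables, shared across all gadgets, that anchor the colour correspondence.

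I would then define the new bags by
\[
B_i' \;=\; \bigcup_{x\in B_i} \Gamma(x)\;\cup\; R,
\]
and verify the two path-decomposition axioms. For contiguity, any $y\in\Gamma(x)$ appears exactly in those $B_i'$ with $x\in B_i$, which form a contiguous subpath of $P$ by hypothesis, while vertices of $R$ appear in every $B_i'$. For edge-cover, every atom of $\Phi''$ falls into one of three cases: internal to a single gadget $\Gamma(x)$ (covered by any bag containing $x$); spanning two gadgets $\Gamma(x)$ and $\Gamma(y)$ for variables $x,y$ that already share an atom in $\Phi$ (covered because $x$ and $y$ must co-occur in some bag of the original decomposition, by the definition of the primal graph); or incident to a vertex of $R$ (covered by every bag, since $R\subseteq B_i'$ for all $i$).

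Finally I would bound the width by $|B_i'|\le 9(w+1)+|R|$ and, matching the explicit constants of the construction (namely, that the persistent gadget overhead one needs to keep present globally gives exactly the ``$+2$'' after accounting for the pathwidth-vs-bag-size offset), conclude that path-width of $\Phi''$ is at most $9w+2$. The main obstacle will be the careful combinatorial bookkeeping: I must show that $|\Gamma(x)|\le 9$ holds uniformly with no shared vertices between different gadgets that would disrupt contiguity, and that all auxiliary atoms introduced by the alternation encoding can be supported inside the already-present bags $B_i'$ rather than forcing new original variables into $B_i$ and thereby inflating the multiplicative constant.
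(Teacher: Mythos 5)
There is a genuine gap. Your accounting assumes that $\Phi''$ is obtained by replacing each variable $x$ of $\Phi$ by a private gadget $\Gamma(x)$ of at most nine vertices, plus a constant global set $R$. That does not match the construction: each variable of $\Phi$ contributes only \emph{three} vertices ($x_i,y_i,z_i$), while the bulk of the new vertices are attached to \emph{clauses}, not variables --- a clause of length $k$ spawns $k-2$ chaining variables $q_{i,j}$ (each tripled) plus a $K_3$ for each of the $k-1$ resulting NAE-constraints, i.e.\ $6k-9$ vertices per clause. If you try to fold these into the $\Gamma(x)$'s, a variable occurring in many clauses accumulates an unbounded number of gadget vertices, so the uniform bound $|\Gamma(x)|\le 9$ fails and your bags blow up with the number of clause occurrences, not with $w$. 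Your own closing paragraph flags exactly this worry but does not resolve it.

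The missing idea is to localise each clause's gadget to a \emph{single} bag rather than to its variables: since a clause forms a clique in the primal graph, all its variables co-occur in some bag; associate each clause with the first such bag, duplicating bags along the path so that every bag is associated with at most one clause, and then insert that clause's $6k-9$ new vertices (whose closed neighbourhood lies inside the clause's variables plus themselves) into that one bag only. With this, a generic bag has size $3|B_i|+2$ (the $+2$ being $W$ and $F$, your set $R$), and the unique bag hosting clause $C_i$ has size $3|B_i|+2+6k-9$, which with $|B_i|,k\le w+1$ gives $9w+2$. Note also that even granting your premises, your arithmetic yields bags of size $9(w+1)+|R|$, which exceeds $9w+2$; the stated constant is only reachable because the per-clause overhead is paid once per clause, not once per variable per bag.
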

\begin{proof} 
    Given the path-width of $\Phi$ is $w$ we claim the path-width of 
    $\Phi''$ is $w'$ where $w' \leq 9w+2$. Let $B$ be a path decomposition for $\Phi$ such that the size of the largest bag is $w+1$. As each clause $C_i$ of $\Phi$ appears as a clique in the primal graph all variables in $C_i$ must be contained in at least one bag together. We say a clause is associated with a given bag if it is the first bag such that all variables in the clause appear together. Where a bag is associated with multiple clauses it can be replaced with a path of duplicates, thus a bag is associated with at most one clause. Notice this increases the number of bags by at most the number of clauses.

    We now define a path decomposition $B'$ for $\Phi''$ with a bag $b'_i \in B'$ for each $b_i \in B$. Where $b_i$ contains a variable $v_j \in \Phi$, then $b_j$ contains $x, y, z \in \Phi''$. In addition, $W$ and $F$ are contained in each bag of $B'$ meaning $\vert b_i' \vert = 3\vert B_i \vert +2$. For a clause $c_i \in \phi$ of size $k$ we have $k-2$ new variables in $\Phi'$ $q_{i,j}$ for $1 \leq j \leq k-2$ thus $3(k-2)$ new vertices in $\Phi''$. As $c_i$ was replaced by $k-1$ clauses in $\Phi'$, we also have $3(k-1)$ new vertices for the $K_3$ dedicated to each clause. This leads to a total of $3(k-2)+3(k-1) =6k-9$ new vertices for clause $C_i$. If we call these new vertices $S_{C_i}$, as $N[S_{C_i}] = S_{c_i} \cup C_i$, $S_i$ can also be contained in the bag associated with $C_i$ with this remaining a path decomposition. Thus if bag $B_j$ is associated with clause $C_i$, $\vert B'_j \vert = 3\vert B_i \vert + 6k - 7$ and as $\vert B_p \vert, k \leq w+1$ for all $B_p \in B$, $\vert B_p' \vert \leq 9w+2$ for all $B_p' \in B$.
\qed
\end{proof}

\begin{theorem}
    {\sc Sequential $3$-Colouring Construction Game} is PSPACE-complete for graphs of bounded pathwidth.
\end{theorem}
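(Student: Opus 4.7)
The plan is to reduce QCSP$(K_3)$ on bounded pathwidth, which is Pspace-complete by Theorem~\ref{t:bounded-3-col-pspace}, to the {\sc Sequential $3$-Colouring Construction Game} in a pathwidth-preserving way. Pspace membership of the game is standard~\cite{Bo91}, so the task is to establish hardness on a suitable bounded-pathwidth family.

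Given a QCSP$(K_3)$ sentence $\Phi$ whose primal graph has pathwidth $w$, I would first produce an intermediate sentence $\Phi'$ by inserting dummy quantified variables so that its prefix strictly alternates $\forall, \exists, \forall, \exists, \ldots$; this accommodates the strict-alternation rule of the game and costs only a constant per variable in pathwidth. I would then pass to the graph $\Phi''$ already used in the preceding lemma, whose vertex set bundles each variable of $\Phi'$ into a triple $(x,y,z)$ together with two global vertices $W, F$, and which replaces each binary edge-constraint by a small triangle-gadget. The role of $W, F$ and the triples is twofold: to guarantee that the player about to move on a variable-block genuinely has all three colours available despite the ``no neighbour-colour repeat'' rule of the game, and to encode a QCSP variable choice as a single colouring move on the triple.

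Correctness would proceed by an inductive bisimulation between the two games along the quantifier prefix of $\Phi'$: a $\forall$-move in QCSP matches Bob picking one of the three colours for the corresponding triple, an $\exists$-move matches Alice doing the same, and the inserted dummy variables absorb any forced extra alternation. The game's terminal proper-colouring check then succeeds exactly when every edge-clause of $\Phi'$ is satisfied, which happens exactly when $\Phi$ holds on $K_3$. Combined with the bound $\mathrm{pw}(\Phi'') \leq 9w+2$ from the preceding lemma, this yields Pspace-hardness on a class whose pathwidth is bounded by a fixed constant whenever the source family is.

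The main obstacle I anticipate is fixing a playing order on $V(\Phi'')$ so that neither player can exploit the freedom to colour unrelated vertices out of sequence and escape the intended simulation. One must argue that the triples for each quantifier-block, the clause triangles, and the global vertices $W, F$ are effectively forced to be coloured at the right moments, so that the move on the triple of a variable really behaves like the corresponding $\forall$ or $\exists$ move. This scheduling argument is delicate precisely because the proper-colouring winning condition is global, and it is what distinguishes the Sequential Colouring game from QCSP$(K_3)$ itself.
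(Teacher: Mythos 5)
Your overall route is the paper's: reuse the $\phi''$ construction from Theorem~\ref{t:bounded-3-col-pspace}, pad the prefix with dummy variables to obtain strict alternation, and argue that the game's extra rules do not change the outcome. However, the proposal stops exactly where the one substantive verification is needed, and the ``main obstacle'' you flag is not the real one. In this version of the game the playing order is part of the instance (the graph comes equipped with a linear order on its vertices, which plays the role of the quantifier prefix), so neither player has any freedom to colour ``unrelated vertices out of sequence'': no scheduling argument is required, and the clause triangles, the triples and $W,F$ are simply placed in the existential tail of the order.

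The point that does require an argument is the rule that a player must choose a colour not already used on a previously coloured neighbour. For Existential this rule is harmless (violating it would lose anyway), but for Universal it is a genuine weakening relative to {\sc QCSP$(K_3)$}: in the QCSP, Universal may assign a universal variable a value conflicting with an already-assigned neighbour and thereby falsify the formula, whereas in the game that move is forbidden. The reduction survives because of a specific feature of $\phi''$: every universally quantified vertex is a $z_i$ whose unique neighbour $y_i$ occurs later in the playing order, so when Universal moves on $z_i$ no neighbour is yet coloured, all three colours are available, and Universal's move set is identical in the two problems. Your proposal gestures at ``all three colours available'' but attributes this to $W$, $F$ and the triples rather than to this degree-and-ordering property of the universal vertices; without that observation the equivalence of the two games on $\phi''$, and hence the hardness transfer, is not established.
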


\begin{proof}
    We now consider the two additional restrictions of the {\sc Sequential $3$-Colouring Construction Game}. Where the Universal player is unable to colour a vertex the same as a previously coloured neighbour the problem is identical to that of  {\sc QCSP$(K_3)$} and strict alternation can be overcome using dummy variables while preserving yes and no instances.
    
    We will therefore use this to show the reduction 
used to prove Theorem~\ref{t:bounded-3-col-pspace} 
also holds for both problems. In $\phi''$ all universally quantified variables are in the form $z_i$ with a single neighbour $y_i$, as $z_i$ comes before $y_i$ in the linear ordering of vertices. This means the problems of {\sc Sequential $3$-Colouring Construction Game} and {\sc QCSP$(K_3)$} are equivalent on $\phi''$ thus both problems remain hard for graphs of bounded path-width.
\qed
\end{proof}

\section{$\Pi_{2k}$-QCSP$(K_3, \{1,2\},\{1,3\})$}
\label{s-seq-plus}

In this section, we prove that {\sc QCSP$(K_3, \{1,2\},\{1,3\})$} is a C23-problem, but the main result in this section is that its restriction {\sc $\Pi_{2k}$-QCSP$(K_3, \{1,2\},\{1,3\})$} is a C123-problem. We obtain {\sc QCSP$(K_3, \{1,2\},\{1,3\})$} by augmenting {\sc QCSP$(K_3)$} with some unary relations that allow us to restrict existential variables to some subset of the domain.

\problemdef{QCSP$(K_3, \{1,2\},\{1,3\})$}{A sentence $\phi$ of the form $Q_1 x_1 Q_2 x_2 \ldots Q_n x_n \ \Phi$, where $Q_i \in \{\forall,\exists\}$ and $\Phi$ is a conjunction of atoms involving the edge relation $E$ and the unary relations $\{1,2\},\{1,3\}$.}{Is $\phi$ true on $K_3$?}

\noindent
Another variant,
{\sc QCSP$(K_3, \{1,2\},\{1,3\},\{2,3\},\{1\},\{2\},\{3\})$} is also known as \emph{Quantified List $3$-Colouring}. We consider a slight simplification of this problem as we show not all lists are necessary to ensure hardness, however, the hardness extends to the more general case.

In order to occupy our framework, we will consider bounded alternation versions of our problems. 
The problem {\sc $\Pi_{2k}$-QCSP$(K_3, \{1,2\},\{1,3\})$} is the restriction of {\sc QCSP$(K_3, \{1,2\},\{1,3\})$} to inputs in 
$\Pi_{2k}$-form. 
While it may lead to a less natural variant of {\sc Sequential List $3$-Colouring Construction Game} as the original game insists on a strict alternation between the two players. In this case, bounding the number of alternations between the two players would make the game trivial, we instead allow each player to colour multiple vertices in their turn.

\begin{lemma}
    {\sc $\Pi_{2k}$-QCSP$(K_3, \{1,2\},\{1,3\})$} is $\Pi^{\mathbf{P}}_{2k}$-complete for $2r$-subdivisions of subcubic graphs.
    \label{lem:bounded-list-3-col}
\end{lemma}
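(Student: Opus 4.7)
The plan is to establish $\Pi^{\mathbf{P}}_{2k}$-completeness in two stages: first, build a hard instance whose primal graph is subcubic; then, lift this to an equivalent instance whose primal graph is a $2r$-subdivision of a subcubic graph, via a subdivision gadget that exploits the unary list relations. Membership in $\Pi^{\mathbf{P}}_{2k}$ is immediate: the quantifier-free part is a conjunction of edge and list atoms that is polynomial-time checkable on any ground assignment, and the prefix supplies exactly $2k$ alternations.

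For the subcubic stage I would reduce from $\Pi_{2k}$-$3$-SAT. Each Boolean variable is realised by a QCSP variable restricted via $\{1,2\}$ to two values, with $1$ encoding false and $2$ encoding true. Each $3$-clause is then realised by a small gadget that uses the $K_3$-edge relation together with auxiliary existential variables drawn from the innermost block of the $\Pi_{2k}$-prefix; the key observation here is that conjoining $\{1,2\}$ and $\{1,3\}$ on a vertex pins it to colour $1$, and this pinning supplies enough rigidity to encode arbitrary literal polarities and standard fan-in/fan-out junctions. Standard subcubic copy-gadgets then reduce the maximum degree of every variable vertex to three, without disturbing the $\Pi_{2k}$-alternation pattern.

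For the $2r$-subdivision stage I would replace every edge $xy$ of the resulting subcubic primal graph by a path $x, a_1, \ldots, a_{2r}, y$ of length $2r+1$, adjoining the $2r$ new vertices to the innermost existential block so that $\Pi_{2k}$-form is preserved. The interior of each such path is decorated with a carefully chosen pattern of $\{1,2\}$- and $\{1,3\}$-atoms whose effect is to force the decorated path to simulate the original $K_3$-edge between its two surviving endpoints. Soundness follows by restricting any satisfying evaluation of the subdivided instance to the original variables; completeness follows by exhibiting, for every edge-colouring of the endpoints, a concrete $K_3$-walk of length $2r+1$ along the interior that respects the unary annotations.

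The main obstacle will be the design of this subdivision gadget. Since $K_3$ admits walks of every length at least two between any ordered pair of colours, a bare $2r$-subdivision collapses the adjacency constraint entirely; the unary relations $\{1,2\}$ and $\{1,3\}$ must therefore compensate by rigidifying the interior, and the odd parity of the path length $2r+1$ must be exploited uniformly in $r$. Secondary bookkeeping concerns are that the new internal variables must be placed so the overall primal graph is exactly a $2r$-subdivision of a subcubic graph, and that all auxiliary quantifications remain inside the innermost existential block so the $\Pi_{2k}$ prefix pattern is genuinely preserved.
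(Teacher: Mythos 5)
Your overall architecture (membership, a reduction from a $\Pi^{\mathbf{P}}_{2k}$-complete quantified SAT variant, lists to restrict domains, long decorated paths to realise the subdivision, and placing all auxiliary variables in the innermost existential block) matches the paper's in spirit, but the step you yourself flag as ``the main obstacle'' is a genuine gap: the gadget you postulate does not exist. A path $v_0,v_1,\dots,v_{2r},v_{2r+1}$ whose interior vertices are existentially quantified and carry only the unary constraints $\{1,2\}$, $\{1,3\}$ (or none) induces between its endpoints exactly the relational composition of the relations $\{(a,b): a\neq b,\ b\in L_i\}$, and no choice of the lists $L_i$ yields the full disequality relation on $\{1,2,3\}$. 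Every available list contains colour $1$ and has at least two elements, so a short forward-reachability computation shows that excluding both $(2,2)$ and $(3,3)$ from the end-to-end relation would force the set of colours attainable at $v_{2r}$ from $v_0=a$ to be the singleton $\{a\}$ for both $a=2$ and $a=3$, which is impossible. Concretely, with a $\{1,2\}$-decorated interior the pair $(3,3)$ survives; with a $\{1,3\}$-decorated interior the pair $(2,2)$ survives. So soundness fails: your subdivided instance is not equivalent to the subcubic one, and no-instances can become yes-instances.

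The paper resolves exactly this point by never asking a long path to simulate a full $K_3$-edge. It reduces from $\Pi_{k}$-Quantified Not-All-Equal 3-SAT, gives every Boolean-variable vertex the list $\{1,2\}$, and uses odd paths with $\{1,2\}$-interiors only where the weaker guarantee ``if an endpoint is coloured from $\{1,2\}$ then the other endpoint avoids that colour'' suffices; the clause gadget (the three-literal construction of Golovach and Paulusma) is designed so that its single $\{1,3\}$-interior path between the two list-$\{1,2,3\}$ clause vertices, together with the NAE semantics, carries the rest of the correctness argument. Degree reduction to subcubic is done afterwards with even $\{1,2\}$-paths acting as equality gadgets. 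To rescue your plan you would have to weaken the specification of the path gadget in the same way and redesign the endpoint lists, the clause gadget, and (most naturally) the source problem around that weaker, list-dependent disequality.
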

\begin{proof}
We will reduce from the $\Pi^{\mathbf{P}}_k$-complete problem {\sc $\Pi_k$-Quantified-Not-All-Equal-3-SAT} ($\Pi_k$-QNAE-3-SAT) \cite{Edith04,Martin05a}. Let $\phi = Q_1 x_1 Q_2 x_2 \ldots Q_n x_n \Phi$ be an instance of $\Pi_k$-QNAE-3-SAT where $x_i \in \{0,1\}$, $\Phi = NAE_3(C_1) \lor NAE_3(C_2) \lor \ldots \lor NAE_3(C_m)$ where $C_i = (x_h, x_i, x_j)$ and $x_h,x_i,x_j \in \{x_1,x_2,\ldots,x_n\}$.

For each variable $x_i$ in $\phi$ there are three variables $\exists x_i$, $Q_i z_i$ in $\phi'$ with $L(x_i) = \{1,2\}$ and $L(z_i) = \{1,2,3\}$. Paths of length $2p+1$ are introduced between $z_i$ and $x_i$ with each inner vertex having the list $\{1,2\}$. In addition for each clause $C_p$ we use the three literal clause gadget from \cite{GP14} introducing variables $C_p ,C'_p,C''_p$ where $L(C_p)=\{1,2\}$ and $L(C'_p) = L(C''_p)\{1,2,3\}$. We add the paths of length $2p+1$ between the vertices $x_h$ $C_p$, $x_i$ $C'_p$, $x_j$ $C''_p$, $C_p$ $C'_p$, $C_p$ $C''_p$ with all inner vertices assigned the list $\{1,2\}$. Notice this forces the path to alternate between colours and enforces that given a single endpoint is coloured 1 or 2 the other endpoint cannot be coloured the same. There is also a path between $C'_p$ and $C''_p$ with inner vertices assigned the list $\{1,3\}$, which has the same impact where a vertex is coloured 1 or 3.

The ordering of $\phi'$ follows that of $\phi$ for variables $z_i$. It is then followed by all remaining vertices, as these are all existentially quantified their ordering does not matter. This leads to a prefix $\ Q_1 z_1 Q_2 z_2 \ldots Q_n z_n \exists^*$.

\begin{figure}
    \centering
    \begin{tikzpicture}
        \draw[fill=black] (0,0) circle (3pt); 
        \node at (0,-0.5) {$C_p \{1,2\}$};
        \draw[fill=black] (-1,2) circle (3pt);
        \node at (-1,2.5) {$x_h \{1,2\}$};
        \draw[fill=black] (-3,1) circle (3pt);
        \node at (-3,0.5) {$z_h \{1,2,3\}$};
        
        \draw[fill=black] (2,3) circle (3pt); 
        \node at (3,3) {$C_p' \{1,2,3\}$};
        \draw[fill=black] (2,5) circle (3pt);
        \node at (2,5.5) {$x_i \{1,2\}$};
        \draw[fill=black] (0,4) circle (3pt);
        \node at (0,3.5) {$z_i \{1,2,3\}$};
        
        \draw[fill=black] (4,0) circle (3pt); 
        \node at (4,-0.5) {$C_p'' \{1,2,3\}$};
        \draw[fill=black] (5,2) circle (3pt);
        \node at (5,2.5) {$x_j \{1,2\}$};
        \draw[fill=black] (7,1) circle (3pt);
        \node at (7,0.5) {$z_j \{1,2,3\}$};

        \draw[dashed] (0,0) -- (4,0) node [midway, fill=white] {$\{1,2\}$};
        \draw[dashed] (0,0) -- (2,3) node [midway, fill=white] {$\{1,2\}$};
        \draw[dashed] (4,0) -- (2,3) node [midway, fill=white] {$\{1,3\}$};
        
        \draw[dashed] (0,0) -- (-1,2) node [midway, fill=white] {$\{1,2\}$};
        \draw[dashed] (-1,2) -- (-3,1) node [midway, fill=white] {$\{1,2\}$};
        
        \draw[dashed] (2,3) -- (2,5) node [midway, fill=white] {$\{1,2\}$};
        \draw[dashed] (2,5) -- (0,4) node [midway, fill=white] {$\{1,2\}$};
        
        \draw[dashed] (4,0) -- (5,2) node [midway, fill=white] {$\{1,2\}$};
        \draw[dashed] (5,2) -- (7,1) node [midway, fill=white] {$\{1,2\}$};
    
    \end{tikzpicture}
    \caption{Construction for clause $C_p$. Dashed lines denote paths of length $2p+1$ with the label corresponding to the list of internal vertices.}
    \label{fig:QCSP-K_3-hardness-reduction}
\end{figure}

Suppose $\phi$ is a positive instance, that is we can define a winning strategy for the existential player, we can then also define a winning strategy for Existential in $\phi'$. Given the colouring of universal variables in $\phi'$ we can map this to a strategy for the universal player in $\phi$. Where $z_i$ is coloured 1 or 2, $x_i$ must take the other available colour not used by $z_i$, whereas if $z_i$ is coloured 3 without loss of generality we can assign $x_i$ 1. Given $x_i$ is coloured 1 in $\phi'$ we can map this to a strategy where universal assigns 0 to $x_i$ in $\phi$ similarly given a colouring of 2 in $\phi$ we can map this to an assignment of 1 in $\phi$. We can now evaluate the existential vertices $x_i$ in $\phi'$ according to the strategy of existential for variables $x_i$ in $\phi$, again mapping 0, 1 to 1, 2 respectively.

Given this strategy for colouring variables $x_i$ in $\phi'$ we claim we can colour the remaining existential variables therefore showing there is a winning strategy for existential. First note all existential variables $z_i$ can be coloured 3, next as the colouring of variables $x_i$ equates to a winning strategy for existential in $\phi$ for all clauses $C_i = (x_h, x_i, x_j)$ in $\phi$ at most 2 of the variables can be assigned the same value meaning of $C_p ,C'_p,C''_p$ at most 2 of their respective $x$'s can be given the same colour. The colour of $x_h$ fixes that of $C_p$ such that they are different which also forces both $C'_p$ and $C''_p$ to be coloured differently to $C_p$. At least one variable of $x_i, x_j$ must be coloured differently to $x_h$, say $x_i$, this means $C_p$ must be coloured the same as $x_i$. We can then colour $C'_p$ 1 or 2 whichever is not used by $x_i$ leaving $C''_p$ to be coloured 3.

Now given Existential has a winning strategy in $\phi'$ we will construct a winning strategy for her in $\phi$. Now given an assignment of universal variables in $\phi$ we can map this to a colouring of the universal vertices in $\phi'$, given a variable $x_i$ is assigned 0 in $\phi$ we can map this to a strategy of Universal such that $z_i$ is coloured 2, similarly if $x_i$ is assigned 1 we map to $z_i$ coloured 1. Notice as $x_i$ cannot be coloured the same as $z_i$, this is the same strategy such that an assignment of $x_i$ to 0 in $\phi$ maps to one where $x_i$ is coloured 1 $\phi'$. We can now, as previously, evaluate the existential variables of $\phi$ according to the strategy of Existential for variables $x_i$ in $\phi'$, again mapping 1, 2 to 0, 1 respectively.

We now claim every clause of $\phi$ must be satisfied by this assignment. Assume otherwise, let $c_p$ be the first clause such that $NAE_3(C_p)$ is not satisfied, note as this is a winning strategy for Existential in $\phi'$ we must have some valid colouring for $C_p ,C'_p,C''_p$ such that $x_h$, $x_i$, $x_j$ are all equal. Say these were all coloured 1, $C_p$ must be coloured 2, meaning neither $C'_p$ nor $C''_p$ can be coloured 2. These vertices also cannot be coloured 1 due to $x_i, x_j$ leaving only colour 3 for both, however, at most one can be coloured 3 as there is an odd path between them. The same reasoning also holds where $x_h$, $x_i$, $x_j$ are coloured 2, thus a contradiction.

In this construction the only vertices with degree $> 3$ are vertices $x_i$ where $x_i \in \phi$ appears in $> 2$ clauses in this case for each appearance of $x_i$ in $C_j$ we create a new vertex $x_{i,j}$ with an even path to the previous appearance of this variable and an even path to the next. Inner vertices can be coloured $\{1,2\}$ and as this is an even path $x_{i,j} = x_i$ for all $j$.
\qed
\end{proof}
\noindent The proof of Lemma~\ref{lem:bounded-list-3-col} also furnishes the following, which we note in passing.
\begin{lemma}\label{lem:unbounded-list-3-col}
    {\sc QCSP$(K_3, \{1,2\},\{1,3\})$} is $\Pi_2^{\textrm{P}}$-complete for $2r$-subdivisions of subcubic graphs.
\end{lemma}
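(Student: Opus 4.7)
The plan is to observe that the reduction constructed in the proof of Lemma~\ref{lem:bounded-list-3-col} is oblivious to the quantifier prefix of its input: it takes a {\sc QCSP}$(K_3)$ sentence $\phi$ on a subcubic primal graph and produces, in polynomial time, a {\sc QCSP}$(K_3,\{1,2\},\{1,3\})$ sentence $\phi'$ whose primal graph is a $2r$-subdivision of a subcubic graph, preserving truth on $K_3$. The only role of the assumption $\phi \in \Pi_{2k}$ in Lemma~\ref{lem:bounded-list-3-col} is to ensure that $\phi'$ also sits in $\Pi_{2k}$, since the reduction adjoins only fresh existentially quantified subdivision variables, each guarded by one of the two new unary list predicates. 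Dropping that assumption, the very same reduction still yields a correct {\sc QCSP}$(K_3,\{1,2\},\{1,3\})$ instance whose primal graph is the $2r$-subdivision of a subcubic graph, and whose truth value on $K_3$ matches that of $\phi$.

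The first step I would carry out is to invoke the known Pspace-completeness of {\sc QCSP}$(K_3)$ on subcubic primal graphs; should this not be readily available in the stated form, it follows from Pspace-completeness of {\sc QCSP}$(K_3)$ by a standard degree-reduction step where each high-degree variable is replaced by a chain of fresh existentially quantified copies linked by simple equality gadgets implementable with the edge relation of $K_3$. Then I would apply the reduction from the proof of Lemma~\ref{lem:bounded-list-3-col} verbatim, giving Pspace-hardness of {\sc QCSP}$(K_3,\{1,2\},\{1,3\})$ on $2r$-subdivisions of subcubic graphs. Membership in Pspace is immediate from the standard alternating-quantifier evaluation algorithm for {\sc QCSP} over any fixed finite template.

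The principal, and essentially only, obstacle is bookkeeping: one must check that the degree-reduction pre-processing on {\sc QCSP}$(K_3)$ is compatible with the subsequent subdivision gadgets, so that the final primal graph is genuinely the $2r$-subdivision of a subcubic graph, rather than of a graph of bounded but larger degree. Since each degree-reduction gadget is of fixed bounded size and is itself subcubic, its composition with the $2r$-subdivision gadget of Lemma~\ref{lem:bounded-list-3-col} preserves the desired subcubic-before-subdivision structure, so the argument goes through with no substantive new work beyond Lemma~\ref{lem:bounded-list-3-col}.
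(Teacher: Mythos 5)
Your overall conclusion is the one the paper intends---the construction in the proof of Lemma~\ref{lem:bounded-list-3-col} never uses the bound on quantifier alternations, so running it on an unboundedly alternating source instance immediately yields Pspace-hardness---but your proposal misidentifies what that reduction actually reduces \emph{from}, and the repair you build on top of that misidentification contains a genuine gap. The reduction of Lemma~\ref{lem:bounded-list-3-col} starts from $\Pi_k$-Quantified Not-All-Equal 3-SAT, not from {\sc QCSP}$(K_3)$ on subcubic primal graphs; no structural hypothesis on the source instance is needed, because it is the gadgetry of the reduction itself (the odd- and even-length list-$\{1,2\}$ and list-$\{1,3\}$ paths, the clause gadgets, and the final even-path splitting of high-degree variable vertices) that produces a $2r$-subdivision of a subcubic graph. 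The correct argument is therefore one line: apply the identical construction to unrestricted QNAE-3-SAT, which is Pspace-complete, and observe that nothing in the correctness proof refers to the prefix shape.

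The step you propose instead---first establishing Pspace-completeness of {\sc QCSP}$(K_3)$ on subcubic primal graphs via a ``standard degree-reduction with equality gadgets implementable with the edge relation of $K_3$''---is not available. The paper explicitly leaves open whether {\sc QCSP}$(K_3)$ satisfies C2, i.e.\ whether it is hard on subcubic graphs (see the Conclusions), so this cannot be ``invoked,'' and the claimed gadget does not deliver it: over $K_3$ with no unary relations, the usual equality gadget joins each of the two copies to a pair of adjacent blocker vertices, which already costs degree $2$ at each copy and degree $3$ at the blockers, so attaching the copies to the rest of the instance forces degree at least $4$ somewhere. This is the same phenomenon that, via Brooks' theorem, makes $3$-Colouring polynomial on subcubic graphs, and it is precisely the point of adjoining the lists $\{1,2\}$ and $\{1,3\}$: they allow equality and inequality to be propagated along paths, hence within a subcubic (indeed subdivided) graph, which is why the paper performs the degree reduction inside the list construction rather than beforehand on {\sc QCSP}$(K_3)$. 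Replace your first step by an appeal to the Pspace-completeness of unrestricted QNAE-3-SAT and drop the detour through subcubic {\sc QCSP}$(K_3)$, and the proof matches the paper's.
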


\begin{lemma}[\cite{Chen04}]
    {\sc $\Pi_{2k}$-QCSP$(K_3, \{1,2\},\{1,3\})$} satisfies C1.
\end{lemma}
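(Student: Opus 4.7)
The plan is to derive the tractability condition C1 directly from Chen's polynomial-time algorithm for bounded-alternation quantified constraint satisfaction on bounded-treewidth inputs~\cite{Chen04}. Chen's theorem states that for any fixed finite relational template $\mathcal{B}$ and any fixed $k$, the problem $\Pi_{2k}$-QCSP$(\mathcal{B})$ is solvable in polynomial time when the primal graph of the quantifier-free part has bounded treewidth. Since bounded pathwidth implies bounded treewidth, this gives exactly what C1 asks of us here.

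To apply the theorem I would set $\mathcal{B} \df (K_3, \{1,2\}, \{1,3\})$: a fixed finite structure over a signature with one binary and two unary symbols. Given any input sentence $\phi$ of $\Pi_{2k}$-QCSP$(\mathcal{B})$ whose primal graph lies in a class of pathwidth at most $w$, that primal graph has treewidth at most $w$, and so Chen's algorithm decides $\phi$ on $\mathcal{B}$ in time polynomial in $|\phi|$, with the constant depending only on $w$ and $k$. Because $k$, $\mathcal{B}$, and $w$ are all fixed once the problem and the graph class are fixed, this is precisely the statement of C1.

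The only point to verify is that the unary relations pose no additional obstacle in fitting our problem into Chen's setting. This is immediate: atoms of the form $\{1,2\}(x)$ and $\{1,3\}(x)$ involve only a single variable, contribute no edges to the primal graph, and merely restrict the allowed values of $x$ locally; Chen's dynamic programming handles such single-variable constraints at the base level of each bag. The main obstacle conceptually, and the reason worth emphasising to the reader, is the sharp contrast with Lemma~\ref{lem:unbounded-list-3-col}, which shows that without the $\Pi_{2k}$ restriction the problem is Pspace-complete even on bounded pathwidth. Thus the bound on alternation is essential, and in Chen's algorithm this manifests as a constant number of nested projection and complementation phases in the dynamic programme, one per quantifier block, which is exactly what keeps the bag tables of polynomial size. \qed
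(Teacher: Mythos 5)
Your proposal is correct and matches the paper's approach exactly: the paper offers no independent proof of this lemma but simply cites Chen's tractability result for bounded-alternation QCSP on bounded-treewidth instances over a fixed finite template~\cite{Chen04}, which is precisely the theorem you invoke (with the routine observations that pathwidth bounds treewidth and that the unary atoms are harmless). Nothing further is needed.
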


\begin{theorem}
    {\sc $\Pi_{2k}$-QCSP$(K_3, \{1,2\},\{1,3\})$} is a C123-problem.
    \label{thm:C123-bounded-list-col}
\end{theorem}

\noindent
Unlike the bounded alternation case, {\sc QCSP$(K_3,\{1,2\},\{1,3\})$} does not satisfy C1 with the hardness under bounded path-width following directly from {\sc QCSP$(K_3)$}.

\begin{theorem}
    {\sc $\QCSP(K_3, \{1,2\},\{1,3\})$} is a C23-problem.
\end{theorem}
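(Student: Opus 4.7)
The plan is to derive both C2 and C3 directly from Lemma~\ref{lem:unbounded-list-3-col}, which establishes that $\QCSP(K_3, \{1,2\},\{1,3\})$ is Pspace-complete on the $2r$-subdivisions of subcubic graphs. Pspace-completeness qualifies as computational hardness in the framework, so both conditions reduce to a bookkeeping exercise on subdivision numbers.

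For C2, I would observe that $2r$-subdivisions of subcubic graphs are themselves subcubic: each subdivision inserts a new degree-$2$ vertex while preserving the degrees of the original vertices. Instantiating Lemma~\ref{lem:unbounded-list-3-col} with, say, $r=1$ hence gives Pspace-completeness on a subclass of subcubic graphs, which trivially extends to hardness on the full class of subcubic graphs. For C3, given any $j \geq 1$ I would set $\ell = 2r$ for some $r \geq j$; Lemma~\ref{lem:unbounded-list-3-col} then supplies Pspace-completeness on the class ${\cal G}^{\ell}$, where ${\cal G}$ is the class of subcubic graphs, matching the definition of hardness under edge subdivision exactly. (If the formulation of Lemma~\ref{lem:unbounded-list-3-col} is only claimed for $r \geq 1$, one can iterate $2$-subdivision to reach arbitrarily large even $\ell$ without further effort, since the reduction in that lemma is uniform in $r$.)

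To justify the label C23 rather than C123, I would also record why C1 fails. Since $\QCSP(K_3)$ is Pspace-complete on some class of bounded pathwidth by Theorem~\ref{t:bounded-3-col-pspace}, and any instance of $\QCSP(K_3)$ is syntactically a valid instance of $\QCSP(K_3, \{1,2\}, \{1,3\})$ (the two unary relations simply go unused, adding no vertices or edges to the primal graph), the hardness transfers verbatim. This rules out tractability on every bounded-pathwidth class, so C1 cannot hold.

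No significant obstacle is anticipated here: the genuinely technical content has already been packaged into Lemma~\ref{lem:unbounded-list-3-col} and Theorem~\ref{t:bounded-3-col-pspace}, and the present theorem is essentially an unpacking of the C2 and C3 definitions against those two hardness results.
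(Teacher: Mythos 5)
Your proposal is correct and matches the paper's intended justification exactly: the theorem is stated synoptically, with C2 and C3 both read off from Lemma~\ref{lem:unbounded-list-3-col} (since $2r$-subdivisions of subcubic graphs are subcubic, and the Pspace-hardness holds for every $r$), and the failure of C1 noted via Theorem~\ref{t:bounded-3-col-pspace} because every $\QCSP(K_3)$ instance is already a $\QCSP(K_3,\{1,2\},\{1,3\})$ instance. No gaps.
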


\section{{\color{black}{Forbidding labelled graphs}}}
\label{sec:new-for-journal}

{\color{black}
It has been observed by Hubie Chen\footnote{Private communication, 2024.} that the objects one should be forbidding from inputs to a problem such as $\QCSP(A)$ are not just graphs, but graphs with vertices of two types (universal and existential), as well as a relative ordering on those that represent the quantifier order. In this section, we explore some simple cases where the vertices must have one of the two types, but the order is left undefined. Essentially, we forbid the structure for any of the orderings. Let $H$ be a graph in which all the vertices are \emph{labelled} by either $\exists$ or $\forall$ (we may think of this as a $2$-coloured graph)\footnote{We resist using a different notation for these $2$-labelled graphs. Whether $H$ represents a simple graph or a $2$-labelled graph will be clear from the context.}. An input $\phi$ (with quantifier-free part $\Phi$) for $\QCSP(A)$ is considered $H$-subgraph-free if the graph $G(\Phi)$, having been $2$-labelled according to which vertices are universal and which are existential, omits $H$ as a ($2$-labelled) subgraph. Let us remark on some simple general results. If $H$ is a subgraph of $H'$ then hardness for the $H$-subgraph-free case translates immediately to hardness for the $H'$-subgraph-free case.

\begin{corollary}
    Let $H$ be a $2$-labelled graph. Then
    {\sc $\Pi_{2k}$-QCSP$(K_3, \{1,2\},\{1,3\})$} is $\Pi^{\mathbf{P}}_{2k}$-complete for $H$-subgraph-free graphs if $H$ contains some connected component, $C$, such that any of the following apply:
    \begin{enumerate}
        \item The underlying graph of $C$ is not in $\mathcal{S}$
        \item $C$ contains some universal vertex with degree $>1$
        \item $C$ contains some degree $3$ vertex and some universal vertex 
        \item $C$ contains at least two universal vertices
    \end{enumerate}
    \label{cor:simple}
\end{corollary}
\begin{proof}
    The proofs for all of these lie in previous parts of the paper.  The first part is the principal result as well as a clear corollary of Theorem~\ref{thm:C123-bounded-list-col}.

    The remaining parts follow from the hardness proof of Lemma~\ref{lem:bounded-list-3-col}.

    For the second part, one may note that the hardness proof only uses universal vertices of degree $1$.

    For the third part, one can note that we can make a path from a universal variable to a vertex of degree at least $3$ as long as we care, while not sacrificing hardness.

    The fourth part follows similarly to the third part, as we can make the paths from the universal variables arbitrarily long, we can effectively make universal variables arbitrarily far apart.
\end{proof}
We will now make a further simplification, to study only $\Pi_2$ instances. Let us recall that  {\sc $\Pi_{2}$-QCSP$(K_3, \{1,2\},\{1,3\})$} is $\Pi_2^{\textrm{P}}$-complete in general. We will consider this problem on $H$-subgraph-free instances, where $H$ is a $2$-labelled graph of small size. We will use the following notation. If we assume $H$ to be just a graph with vertex set $V(H)=\{1,\ldots,n\}$, then $H(\alpha)$, where $\alpha \in \{\forall,\exists\}^n$ indicates the graph annotated with vertex $i$ labelled by $\alpha(i)$. For example, $P_3(\exists\forall\exists)$ indicates the path on three vertices with the middle vertex being universal while the end vertices are existential. We use, e.g., $\exists^3$ as an abbreviation for $\exists\exists\exists$.


\subsection{$H$ contains some component of size at most $2$}

\begin{lemma}\label{lem:h+p2}
	Let $H'$ be some $2$-labelled graph, if {\sc $\Pi_{2}$-$\QCSP(K_3, \{1,2\},\{1,3\})$} is:
    \begin{itemize}
        \item $\Pi_2^{\textrm{P}}$-complete for $H'$-subgraph-free graphs then it is also $\Pi_2^{\textrm{P}}$-complete for $(H'+P_2(\exists^2))$-subgraph-free graphs;
        \item \NP-complete for $H'$-subgraph-free graphs then it is also \NP-complete for $(H'+P_2(\exists^2))$-subgraph-free graphs;
        \item in P for $H'$-subgraph-free graphs then it is also in P for $(H'+P_2(\exists^2))$-subgraph-free graphs.
    \end{itemize}
\end{lemma}
\begin{proof}
Note that hardness translates immediately. We will make an argument that the same holds for the positive algorithm (be that in \NP\ or \cP).

	Let $\phi$ be an instance of {\sc $\Pi_{k}$-$\QCSP(K_3, \{1,2\},\{1,3\})$} and let $G$ be the $2$-labelled graph arising from $\phi$. Let $A$ contain the set of universal vertices $G$ and $C^E = \{C^E_1,\ldots, C^E_k\}$ be the connected components of $G \setminus A$. If there is some $C^E_i \in C^E$ such that $G[N[C^E_i]]$ is a no-instance then $G$ is a no-instance. As there is no path of existential vertices between a pair of vertices $u \in C^E_i$ and $v \in C^E_j$ where $C^E_i \neq C^E_j \in C^E$, $G$ is a yes-instance if, and only if, $G[N[C^E_i]]$ is a yes-instance for all $C^E_i \in C^E$.

    First assume $G$ contains some contains some $H'$ but no $H'+P_2(\exists^2)$. As $H'$ has a maximum matching of size at most $\left \lfloor \frac{|H'|+1}{2} \right \rfloor$, every $C^E_{i} \in C^E$ has a maximum matching of size $k \leq \left \lfloor \frac{|H'|+1}{2} \right \rfloor$ and there is a set $X =\{x_i: 1 \leq i \leq k\}$ such that every edge of $C^E_{i}$ has some endpoint in $X$.

    \begin{figure}
        \centering
        \begin{tikzpicture}[every node/.style={circle,thick,draw}]
            \node (x) at (0,3) {$x$};
            
            \node[very thick] (e1) at (-2,1.5) {$e_1$};
            \node[very thick] (e2) at (0,1.5) {$e_2$};
            \node[very thick] (e3) at (2,1.5) {$e_3$};
        
            \node (a0)[very thick] at (-2,0) {$a_1$};
            \node (a1)[very thick] at (-1,0) {$a_2$};
            \node (a2)[very thick] at (1,0) {$a_3$};
            \node (a3)[very thick] at (2,0) {$a_3'$};
        
            \draw (x) -- (e1);
            \draw (x) -- (e2);
            \draw (x) -- (e3);
        
            \draw[very thick] (e1) -- (a0);
            \draw[very thick] (e1) -- (a1);
            \draw[very thick] (e2) -- (a1);
            \draw[very thick] (e2) -- (a2);
            \draw[very thick] (e3) -- (a2);
            \draw[very thick] (e3) -- (a3);
        \end{tikzpicture}
        \caption{No instance as described in Lemma~\ref{lem:h+p2}. The subgraph $G_P$ is drawn in bold with existential vertices $e_1,e_2,e_3 \in P$ each with a pair of universal neighbours in $G_P$. The case $a_1=a'_3$ cannot occur as these vertices are distinct in the path $P$.}        
        \label{fig:no-inst}
    \end{figure}

    We claim if $G[N[C^E_i]]$ contains some path of length $8k+3$ then this is a no-instance. Let $G_P$ denote this $P_{8k+4}$ subgraph and $P= \{p_1,\ldots,p_{8k+4}\}$ denote the vertices of $V(G) \cap V(G_P)$. If $G$  contains some pair of adjacent universal vertices this is a no-instance, that is we assume $A$ is an independent set. This implies that if $G_P \setminus A$ contains $r$ connected components then $P$ contains at most $r+1$ universal vertices. 
    
    If $G_P$ contains some adjacent pair of existential vertices, then at least one of these is in $X$. There are at most $k$ connected components of $G_P \setminus A$ with size at least $2$ (one for each vertex in $X$). Further, as each vertex of $X \cap P$, has at most $2$ neighbours in $G_P$, at most $3k$ existential vertices in $P$ have some existential neighbour in $G_P$. Note all other existential vertices in $P \setminus \{p_1,p_{8k+4}\}$ have a pair of universal neighbours in $G_P$, let $S$ be this set of existential vertices.

    From above $G_P \setminus (A \cup \{p_1,p_{8k+4}\})$ contains at most $3k+|S|$ existential vertices and at most $k+|S|$ connected components and so at most $k+|S|+1$ universal vertices. As $|P| = 8k+4$ it follows that $4k+2|S|+1 \geq 8k+2$ and $|S| \geq 2k+1$ (this uses also the fact that $|S|$ is an integer). Every edge of $C^E_i$ has some endpoint in $X$, so for any $2k+1$ vertices in $V(C^E_i) \setminus X$, at least $3$ have a common endpoint in $X$. This implies that there are vertices $e_1,e_2,e_3 \in S$ such that in $C^E_i$ these have some common neighbour $x \in X$.
    
    We now claim this is a no-instance, in particular we consider the case where $e_2$ has distance $2$ to both $e_1$ and $e_3$ in $G_P$. That is there are vertices $a_1,a_2,a_3,a_4 \in A$ such that $(a_1,e_1,a_2,e_2,a_3,e_3,a_4)$ is a path in $G_p$, see Figure~\ref{fig:no-inst}. Note our arguments will also hold where $e_1,e_2,e_3$ are at distance $>2$ in $G_P$. If $a_1,a_2,a_3,a_4$ are assigned the colours $1,2,3,1$ then $e_1$ must be coloured $3$, $e_2$ coloured $1$ and $e_3$ coloured $2$. Now $x$ has neighbours coloured $1,2,3$ hence this is a no-instance.

    For every $C^E_i \in C^E$,  $G[N[C^E_i]]$ contains no path of length $8k+3$ and so can be solved in polynomial time (by bounded treewidth, following from bounded treedepth, from \cite{Chen04}). We may now assume $G$ contains no $H'$ subgraph in which case the complexity follows on that for $H'$-subgraph-free graphs.
\end{proof}

\begin{lemma}
	Let  $H'$ be some  $2$-labelled graph, if {\sc $\Pi_{2}$-QCSP$(K_3, \{1,2\},\{1,3\})$}  is in \NP\ for  $H'$-subgraph-free graphs then  {\sc $\Pi_{2}$-QCSP$(K_3, \{1,2\},\{1,3\})$} is \NP-complete for $(H'+P_2(\exists\forall))$-subgraph-free graphs.
    \label{lem:h+ae2}
\end{lemma}
\begin{proof}
Note that \NP-hardness follows immediately from any instance that contains only existential variables.

	Let $\phi$ be an instance of {\sc $\Pi_{2}$-QCSP$(K_3, \{1,2\},\{1,3\})$} and let $G$ be the $2$-labelled graph arising from $\phi$ with universal vertices $A \subseteq V(G)$. First assume $G$ contains some $H'$, on the vertices $Z \subseteq V(G)$, but no $H'+P_2(\exists \forall)$. Without loss we may remove any isolated vertices and given any adjacent universal vertices results in a no-instance, we assume every universal vertex is adjacent to some existential vertex. Now given $G$ is $H'+P_2(\exists\forall)$-subgraph-free, $A \subseteq N[Z]$. 
    
    Further if some existential vertex is adjacent to $3$ universal vertices then this is a no-instance, otherwise $|A| \leq 2|V(H')|$. Given $H'$ has constant size we may branch on the possible strategies for $A$ resulting in an arbitrary instance of {\sc CSP$(K_3, \{1,2\},\{1,3\})$} which is \NP-complete. Else where $G$ does not contain some $H'$, the complexity follows that for $H'$-subgraph-free graphs, as {\sc $\Pi_{2}$-QCSP$(K_3, \{1,2\},\{1,3\})$}  is in \NP\ for $H'$-subgraph-free graphs this concludes our proof.
\end{proof}
\noindent The results of this section can be subsumed in the following theorem (see also Figure~\ref{fig:con-comp-2}).

\begin{theorem}
Let $H$ be a graph with each vertex labelled as either universal or existential, such that every component has size at most $2$. If there is some $k\geq 0$ such that $H$ is a subgraph of $k P_2(\exists^2)$, then {\sc $\Pi_{2}$-QCSP$(K_3, \{1,2\},\{1,3\})$} is in P.  If $H$ is a supergraph of $P_2(\exists \forall)$ then {\sc $\Pi_{2}$-QCSP$(K_3, \{1,2\},\{1,3\})$} is $\Pi_2^{\textrm{P}}$-complete. Otherwise, {\sc $\Pi_{2}$-QCSP$(K_3, \{1,2\},\{1,3\})$} is \NP-complete.
\label{thm:each-component-two}
\end{theorem}

\begin{figure}[h]
\color{black}
\begin{center}
\begin{tabular}{|c|c|}
\hline
$H$ is a & complexity \\
\hline
subgraph of $kP_2(\exists^2)$, & P \\
for some $k\geq 0$ & \\
\hline
supergraph of $P_2(\forall^2)$ & $\Pi_2^{\textrm{P}}$-complete \\
\hline
else & \NP-complete \\
\hline
\end{tabular}
\end{center}
\caption{Complexity of {\sc $\Pi_{2}$-QCSP$(K_3, \{1,2\},\{1,3\})$} when forbidding a labelled graph $H$ such that every component has size at most $2$.}
\label{fig:con-comp-2}
\end{figure}

\subsection{The case $P_5(\exists^5)$.}

\begin{lemma}\label{lem:e5}
	{\sc $\Pi_{2}$-QCSP$(K_3, \{1,2\},\{1,3\})$} is in $P$ for $P_5(\exists^5)$-subgraph-free graphs.
\end{lemma}
\begin{proof}
Let $\phi$ be an instance of {\sc $\Pi_{2}$-QCSP$(K_3, \{1,2\},\{1,3\})$} and let $G$ be the $2$-labelled graph arising from $\phi$. Let $A$ contain the set of universal vertices $G$ and $C^E = \{C^E_1,\ldots, C^E_k\}$ be the connected components of $G \setminus A$. If there is some $C^E_i \in C^E$ such that $G[N[C^E_i]]$ is a no-instance then $G$ is a no-instance. Otherwise given there is no path of existential vertices between some $u \in C^E_i$ and some $v \in C^E_j$ for any $C^E_i \neq C^E_j \in C^E$, this is a yes-instance if, and only if $G[N[C^E_i]]$ is a yes-instance for every $C^E_i \in C^E$.

We claim if for some $C^E_i \in C^E$,  $G[N[C^E_i]]$ contains some path of length $19$, then this is a no-instance. Let $G_P$ be this $P_{20}$ subgraph of $G[N[C^E_i]]$ and $P= \{ p_1,\ldots,p_{20}\}$ denote the vertices of $V(G)\cap V(G_P)$. Towards this we claim the maximum matching of $C^E_i$ has size at most $2$. Assume for contradiction, $(x_1,x_2), (y_1,y_2), (z_1,z_2)$ is a matching in $C^E_i$. Given $C^E_i$ is connected, there is some path from $x_1$ to $y_1$. If this path contains some vertex in $V(C^E_i) \setminus \{x_2,y_2\}$ there is some $P_5(\exists^5)$, that is we assume $x_1,y_1$ are adjacent. There is also some path between $x_1$ and $z_1$, likewise this cannot contain some vertex from $V(C^E_i) \setminus \{x_2,z_2\}$. If there is an edge from $z_1$ (symmetrically $z_2$) to some vertex in $\{x_1,x_2,y_1,y_2\}$ then there is some $P_5(\exists^5)$, a contradiction. 

As $C^E_i$ contains a maximum matching of size at most $2$, there are vertices $x,y$ such that every edge of $C^E_i$ has some endpoint in $\{x,y\}$. If $G_P$ contains some pair of adjacent existential vertices at least one of these must be $x$ or $y$, it follows that $G_P \setminus A$ contains at most $2$ components with size at least $2$. Now, $\{x,y\} \cap P$ has at most $2$ neighbours in $G_P$. Thus, $P$ contains at most $6$ existential vertices with some existential neighbour in $G_P$. Note all other existential vertices in $P \setminus \{p_1,p_{8k+4}\}$ have a pair of universal neighbours in $G_P$, let $S$ be this set of existential vertices.

Now $P\setminus \{p_1,p_{8k+4}\}$ contains at most $|S|+6$ existential vertices and $G_P \setminus (A\cap \{p_1,p_{8k+4}\})$ contains at most $|S|+2$ connected components. If $G$ contains some pair of adjacent universal vertices this is a no-instance, that is we assume $A$ is an independent set. As $G_P \setminus (A\cap \{p_1,p_{8k+4}\})$ contains at most $|S|+2$ connected components $P \setminus \{p_1,p_{8k+4}\}$ contains at most $|S|+3$ universal vertices. Given $|P\setminus \{p_1,p_{8k+4}\}| = 18$, it follows $2|S|+9 \geq 18$ that is $|S| \geq 5$. As $C^E_i$ is connected and every edge has some endpoint in $\{x,y\}$, for any $5$ vertices in $V(C^E_i) \setminus \{x,y\}$, at least $3$ have a common endpoint in $\{x,y\}$. Without loss we may assume that $S$ contains vertices $e_1,e_2,e_3$ which are all adjacent to $x$ in $C^E_i$.

Following the approach of Lemma~\ref{lem:h+p2}, we again claim this is a no-instance. Let us again consider the minimal example with respect to number of vertices. That is $e_2$ has distance $2$ to both $e_1$ and $e_3$ in $G_P$ and there are vertices $a_1,a_2,a_3,a_4 \in A$ such that $(a_1,e_1,a_2,e_2,a_3,e_3,a_4)$ is a path in $G_p$, see Figure~\ref{fig:no-inst}. Again our arguments will also hold where $e_1,e_2,e_3$ are at distance $>2$ in $G_P$. If $a_1,a_2,a_3,a_4$ are assigned the colours $1,2,3,1$ then $e_1$ must be coloured $3$, $e_2$ coloured $1$ and $e_3$ coloured $2$. Now $x$ has neighbours coloured $1,2,3$ hence this is a no instance.

We may now assume $G[N[C^E_i]]$ is $P_{20}$-subgraph-free, implying it has treedepth at most $20$. It follows that this can be solved in polynomial time (by bounded treewidth, from \cite{Chen04}).
\end{proof}

\subsection{The case $P_5(\forall\exists^4)$.}

\begin{lemma}\label{lem:ae4}
	{\sc $\Pi_{2}$-$\QCSP(K_3, \{1,2\},\{1,3\})$} is \NP-complete for $P_5(\forall\exists^4)$-subgraph-free graphs.
\end{lemma}
\begin{proof}
	Let $\phi$ be an instance of {\sc $\Pi_{2}$-QCSP$(K_3, \{1,2\},\{1,3\})$} and let $G$ be the $2$-labelled graph arising from $\phi$. Let $A$ contain the set of universal vertices $G$ and $C^E = \{C^E_1,\ldots, C^E_k\}$ be the connected components of $G \setminus A$. We assume $G$ contains no isolated vertices as without loss these can be removed. Consider some $C^E_i \in C^E$, if $C^E_i$ does not contain some path of length $5$ then from Lemma~\ref{lem:e5} we may process this component in polynomial time. Say $C^E_i$ contains some path of length at least $6$, as $G$ is $P_5(\forall\exists^4)$-subgraph-free, $G[N[C^E_i]]$ contains no universal vertices, and $C^E_i$ is a connected component of $G$. Else let $P = \{p_1,p_2,p_3,p_4,p_5\}$ be some $P_5$ in $C^E_i$. If some universal vertex is adjacent to some vertex in $p_i \in P$, then $p_i = p_3$, else $G$ contains $P_5(\forall\exists^4)$. Say $p_3$ has some universal neighbour $a$. If $p_1$ has some neighbour $p_1'\neq p_2 \in G[N[C^E_i]]$ then this results in some $P_5(\forall\exists^4)$, that is $p_1$ has degree $1$. Symmetrically the same holds for $p_5$.
    
    Further if $p_2$ (symmetrically $p_4$) has some neighbour $x \in G[N[C^E_i]]$ it must be existential and given $(x,p_2,p_3,p_4,p_5)$ is a $P_5$, it follows that $x$ has no universal neighbours and every universal vertex in $G[N[C^E_i]]$ must be adjacent to $x_3$. If some vertex has $3$ universal neighbours, then this is a no-instance. We therefore assume each vertex has at most $2$ universal neighbours and so $G[N[C^E_i]]$ contains at most $2$ universal vertices. Branching on each of these gives an instance of {\sc CSP$(K_3, \{1,2\},\{1,3\})$} which is polynomial time solvable on $P_6$-subgraph-free graphs (by bounded treewidth).

	That is we may assume $G$ contains only existential vertices, as {\sc CSP$(K_3, \{1,2\},$ $\{1,3\})$} is \NP-complete and contains no $P_5(\forall\exists^4)$-subgraph {\sc $\Pi_{2}$-QCSP$(K_3, \{1,2\},$ $\{1,3\})$} is \NP-complete for $P_5(\forall\exists^4)$-subgraph-free graphs. 
\end{proof}

\subsection{$H$ has size at most $3$.}

We sum up the results of this section in the following synoptic theorem where $H$ has size at most $3$ (\mbox{cf.} also Figure~\ref{fig:annotated}).

\begin{theorem}
Let $H$ be a graph on at most three vertices with each vertex labelled as either universal or existential. If $H$ is a subgraph of $P_3(\exists^3)$, then {\sc $\Pi_{2}$-QCSP$(K_3, \{1,2\},\{1,3\})$} is in P.  If $H$ is a supergraph of any among $P_2(\forall^2)$, $P_3(\exists\forall\exists)$, or $K_3(\exists^3)$, then {\sc $\Pi_{2}$-QCSP$(K_3, \{1,2\},\{1,3\})$} is $\Pi_2^{\textrm{P}}$-complete. Otherwise, {\sc $\Pi_{2}$-QCSP$(K_3, \{1,2\},\{1,3\})$} is \NP-complete.
\end{theorem}
\begin{proof}
The case in which each component of $H$ is of size at most $2$ comes from Theorem~\ref{thm:each-component-two}.

The case in which the underlying graph is $P_3$ is covered a fortiori by Lemmas~\ref{lem:e5} and \ref{lem:ae4} (as well as Corollary~\ref{cor:simple} for the trivial cases in which there is more than one universal variable, or the universal variable is in the middle).

The case in which the underlying graph is $K_3$ is covered by Corollary~\ref{cor:simple} (and follows from Theorem~\ref{thm:C123-bounded-list-col}).
\end{proof}
\begin{figure}[h]
\color{black}
\begin{center}
\begin{tabular}{|c|c|}
\hline
$H$ is a & complexity \\
\hline
subgraph of $P_3(\exists^3)$ & P \\
\hline
supergraph of $P_2(\forall^2)$ & $\Pi_2^{\textrm{P}}$-complete \\
\hline
supergraph of $P_3(\exists\forall\exists)$ & $\Pi_2^{\textrm{P}}$-complete \\
\hline
supergraph of $K_3(\exists^3)$ & $\Pi_2^{\textrm{P}}$-complete \\
\hline
else & \NP-complete \\
\hline
\end{tabular}
\end{center}
\caption{Complexity of {\sc $\Pi_{2}$-QCSP$(K_3, \{1,2\},\{1,3\})$} when forbidding a labelled graph $H$ of size at most $3$.}
\label{fig:annotated}
\end{figure}

}

\section{Long Edge Disjoint Paths}
\label{sec:long}

The {\it treedepth} of a graph $G$ as the minimum height of a forest~$F$ such that for any pair of adjacent vertices in $G$ one must be an ancestor of the other in $F$. In this section, we prove that {\sc Long Edge Disjoint Paths} is a C23-problem that is polynomial-time solvable for graphs of bounded treedepth but \NP-complete for certain classes of bounded path-width.

The {\sc Edge Disjoint Paths} problem takes as input a graph and $k$ terminal pairs $(s_i,t_i)$. The problem asks if there exists $k$ edge-disjoint (but not necessarily vertex-disjoint) paths connecting each of $(s_1,t_1)$, \ldots, $(s_k,t_k)$. The {\sc Edge Disjoint Paths} problem is known to be a C23-problem \cite{BJMOPPSV}. Consider its variant {\sc Long Edge Disjoint Paths} whose input is as {\sc Edge Disjoint Paths} but we require the yes-instances to be paths of length at least $k$.

\begin{theorem}
    {\sc Long Edge Disjoint Paths} is a C23-problem which is in \cP\ on all classes of bounded treedepth.
    \label{lem:LEDP-1}
\end{theorem}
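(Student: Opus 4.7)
The plan is to verify the three conditions defining a C23-problem together with polynomial-time solvability on graphs of bounded treedepth, leveraging known facts about {\sc Edge Disjoint Paths} (already known to be a C23-problem) together with standard structural properties of treedepth.

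For C2, I would reduce from {\sc Edge Disjoint Paths} on subcubic graphs. Given an instance $(G,\{(s_i,t_i)\}_{i=1}^k)$ with $G$ subcubic, construct $G'$ as the $k$-subdivision $G^k$; since subdivision preserves vertex degrees, $G'$ remains subcubic. The same terminal pairs with the same $k$ form a {\sc Long Edge Disjoint Paths} instance on $G'$. In the forward direction, any $k$ edge-disjoint paths in $G$ lift to $k$ edge-disjoint paths in $G'$ whose lengths are at least $k+1$. In the backward direction, the key observation is that every subdivision vertex has degree $2$, so any edge-disjoint path entering a subdivided image of an original edge $e$ must traverse this subdivided path in its entirety, and no two edge-disjoint paths can share any edge of this subdivided image; contracting subdivisions then recovers an edge-disjoint family in $G$.

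For C3, further $\ell$-subdividing any of the hard instances produced for C2 preserves hardness: the same degree-$2$ argument yields a bijection between edge-disjoint path families in a subcubic graph and its $\ell$-subdivision, and the length requirement $k$ (unchanged under subdivision) is only easier to meet because path lengths in the subdivision are $\ell+1$ times their original lengths. Therefore, for any $j$ one may take $\ell = j$ and obtain hardness on $\mathcal{G}^\ell$.

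For the polynomial-time algorithm on graphs of bounded treedepth $d$, I would use the standard fact that every simple path in such a graph contains at most $2^d-1$ vertices (equivalently, at most $2^d-2$ edges). Thus if $k \geq 2^d-1$ the instance is trivially NO. Otherwise $k$ is bounded by a constant depending only on $d$; enumerate all simple paths of length between $k$ and $2^d-2$ between each terminal pair (at most $n^{2^d-2}$ per pair), and then try all combinations of one path per pair and test edge-disjointness. Both enumeration and the combinatorial search run in time $n^{O(1)}$, since both the maximum path length and the number of pairs are bounded by a constant.

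The main obstacle is the backward direction of the C2 reduction. Edge-disjoint paths need not be vertex-disjoint in general, so one must carefully exploit the fact that degree-$2$ interior subdivision vertices force both incident edges to be used together by any path through them, preventing two edge-disjoint paths from ``crossing'' inside a subdivided edge; without this, the contraction step could fail to yield edge-disjoint paths in $G$.
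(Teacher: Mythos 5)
Your proposal is correct and follows essentially the same route as the paper: C2 via a $k$-subdivision reduction from {\sc Edge Disjoint Paths} on subcubic graphs, C3 by applying the known subdivision-hardness of {\sc Edge Disjoint Paths} starting only from instances produced by that reduction (avoiding the no-to-yes pitfall), and membership in \cP\ on bounded treedepth by bounding path length, hence $k$, and brute-forcing. You merely spell out details the paper leaves implicit (the degree-$2$ contraction argument and the explicit $2^d-1$ path bound).
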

\begin{proof}
To prove that it satisfies C2 we reduce from {\sc Edge Disjoint Paths} where we first $k$-subdivide each edge (noting $k$ is part of the input of this problem). This ensures that any edge-disjoint paths that connect the terminal pairs are of sufficient length.

    The proof that it satisfies C3 is then exactly the same as that for {\sc Edge Disjoint Paths}, except that we should start with instances in the form of the previous paragraph, built from {\sc Edge Disjoint Paths} by performing a $k$-subdivision at the start. These obtained instances are yes-instances of {\sc Long Edge Disjoint Paths} if, and only if their subdivisions are yes-instances. Note that if we started from arbitrary instances of {\sc Long Edge Disjoint Paths}, we might map no-instances to yes-instances after subdivision.
    
     Finally, on classes of bounded treedepth, which do not contain the $m$-vertex path $P_m$ as a subgraph for some integer~$m$, we may assume that $k\leq m$. Now a brute force approach to exploring for the paths may be undertaken. 
     \qed
\end{proof}

\begin{lemma}
Let $\mathcal{G}$ be a class of graphs of pathwidth at most $p$. For each $k$, $\mathcal{G}^k$ has pathwidth at most $p+2$.
\end{lemma}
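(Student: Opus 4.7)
The plan is to transform an arbitrary path decomposition of $G \in \mathcal{G}$ of width $p$ into a path decomposition of $G^k$ of width at most $p+2$, by inserting, for each original edge, a short sequence of bags that introduces and then forgets the $k$ new subdivision vertices, always on top of the host bag.

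Fix $G \in \mathcal{G}$ with path decomposition $(B_1, \ldots, B_m)$ where $|B_i| \leq p+1$ for all $i$. For each edge $e = uv \in E(G)$, the definition of a path decomposition guarantees an index $\iota(e)$ with $\{u,v\} \subseteq B_{\iota(e)}$; fix one such choice. Write the subdivision path replacing $e$ in $G^k$ as $u = w_0^e, w_1^e, \ldots, w_k^e, w_{k+1}^e = v$. Construct a new sequence by, for each $i = 1, \ldots, m$ in order, first placing $B_i$ and then, for each edge $e$ with $\iota(e) = i$ (in any order), appending the block
\[
B_i \cup \{w_1^e\},\ B_i \cup \{w_1^e, w_2^e\},\ B_i \cup \{w_2^e, w_3^e\},\ \ldots,\ B_i \cup \{w_{k-1}^e, w_k^e\},\ B_i \cup \{w_k^e\}.
\]
Every bag in the resulting sequence has size at most $(p+1)+2 = p+3$, so the width is at most $p+2$.

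It remains to verify validity. Every vertex of $G^k$ occurs somewhere: original vertices lie in their old bags, and each $w_j^e$ lies in inserted bags for $e$. Every edge is covered: $w_j^e w_{j+1}^e$ appears in $B_i \cup \{w_j^e, w_{j+1}^e\}$, while $u w_1^e$ and $w_k^e v$ appear in $B_i \cup \{w_1^e\}$ and $B_i \cup \{w_k^e\}$ respectively, using that $u, v \in B_i$. For the connectivity condition, every inserted bag contains $B_i$, so the set of bags containing any original vertex remains a contiguous interval (the insertions merely elongate it), while each new vertex $w_j^e$ occurs in at most three consecutive bags inside one inserted block.

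The only bookkeeping wrinkle arises when several edges share a single host bag $B_i$; this is handled by simply concatenating the corresponding blocks, since each block still contains $B_i$ in full and the subdivision vertices of distinct edges are kept in disjoint segments. I do not anticipate a genuine technical obstacle.
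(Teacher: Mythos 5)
Your proof is correct and takes essentially the same approach as the paper: both locate, for each edge $uv$, a host bag containing $u$ and $v$ and replace or augment it with a short run of bags each consisting of that host bag plus two consecutive subdivision vertices, handling multiple edges per bag by concatenating the blocks in series. The only cosmetic difference is that the paper uses $k-1$ bags of the form $b\cup\{p_i,p_{i+1}\}$ while you add two extra boundary bags with a single new vertex; the width bound and validity argument are the same.
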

\begin{proof}
    Let $G$ be an arbitrary graph in $\mathcal{G}$ and $B$ be a path decomposition for $G$ such that the largest bag has size at most $p+1$. From this, a path decomposition $B'$ can be constructed for $G^k$ such that the largest bag has size at most $p+3$. Every edge $(u,v) \in G$ is replaced by a path $u,p_1, p_2, \cdots, p_k, v$ in $G^k$. Consider the first bag $b \in B$ such that $u$ and $v$ appear together, $b$ can be replaced by a path of $k-1$ bags, $b_1,b_2,\cdots, b_{k-1}$ in $B'$. Each bag contains all the vertices of $b$, and for each pair of adjacent vertices in the path $p_1, p_2, \cdots, p_k$ there will be a bag containing both. In particular $b_i = b \cup \{p_i,p_{i+1}\}$ where $1 \leq i \leq k-1$ as $p_i$ will only be adjacent to $u,v, p_{i-1}, p_{i+1}$ thus making this a valid path decomposition. In addition, where multiple edges first appear in $b$ we can apply this procedure for each path in series. 
\qed        
\end{proof}

\begin{theorem}
     {\sc Long Edge Disjoint Paths} is \NP-complete for graphs of bounded pathwidth and so does not satisfy C1.
\end{theorem}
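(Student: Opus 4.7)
The plan is to reduce from \textsc{Edge Disjoint Paths} (EDP) itself. Since EDP is a C23-problem by~\cite{BJMOPPSV}, it in particular fails C1, and so there is an integer $p$ and a class $\mathcal{G}$ of graphs of pathwidth at most $p$ on which EDP is \NP-complete. This hardness on bounded pathwidth is exactly the leverage needed.

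Given an EDP instance $(G,\{(s_i,t_i)\}_{i=1}^{k})$ with $G\in\mathcal{G}$, I would output the \textsc{Long Edge Disjoint Paths} instance $(G^{k},\{(s_i,t_i)\}_{i=1}^{k},k)$, that is, the $k$-subdivision of $G$ with the same terminal pairs and path-length bound $k$. By the preceding lemma, $G^{k}$ has pathwidth at most $p+2$, so the reduction stays within a class of bounded pathwidth, as required. This is essentially the same subdivision trick used to establish C2 for \textsc{Long Edge Disjoint Paths}, repurposed to exploit that EDP itself is already hard on bounded pathwidth.

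For correctness, I would argue both directions. If $G$ admits $k$ edge-disjoint paths between the prescribed terminal pairs, each such path lifts canonically to a path in $G^{k}$ by replacing every edge with the corresponding subdivided path of length $k+1$; the lifts remain pairwise edge-disjoint, and each one has length at least $k+1 \geq k$. Conversely, any path in $G^{k}$ between two original vertices must traverse complete subdivision paths in full, since every subdivision vertex has degree exactly $2$ in $G^{k}$ and hence cannot be an endpoint or a branch point of such a path. Consequently, $k$ edge-disjoint paths in $G^{k}$ contract cleanly to $k$ edge-disjoint paths in $G$ connecting the same terminals, so a no-instance of EDP is carried to a no-instance of \textsc{Long Edge Disjoint Paths}.

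The main obstacle is really only the invocation of a suitable \NP-hardness result for EDP on bounded pathwidth, which is packaged in the C23-classification of EDP from~\cite{BJMOPPSV}; once that is in hand, the $k$-subdivision simultaneously enforces the length bound and, via the preceding lemma, preserves bounded pathwidth. It follows that \textsc{Long Edge Disjoint Paths} is \NP-complete on the union of the classes $\mathcal{G}^{k}$ over all $k\geq 1$, which has pathwidth at most $p+2$, and hence C1 fails.
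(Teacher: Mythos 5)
Your proposal matches the paper's proof: both reduce from \textsc{Edge Disjoint Paths} on a class of bounded pathwidth by $k$-subdividing each edge and invoking the preceding lemma to keep the pathwidth of the subdivided instances bounded. The only caveat is that being a C23-problem does not by itself entail failing C1 (the paper defines C23-problems as satisfying C2 and C3, ``usually not C1''), so you should invoke the \NP-hardness of \textsc{Edge Disjoint Paths} on bounded pathwidth as an explicit known fact rather than deriving it from the C23 label; with that adjustment your argument, including the two-direction correctness check that the paper leaves implicit, is sound.
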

\begin{proof}
We use the same trick that we used in the proof of Lemma~\ref{lem:LEDP-1} where we reduce from {\sc Edge Disjoint Paths} by first $k$-subdividing each edge (recall $k$ is part of the input of this problem). This is now a correct reduction from {\sc Edge Disjoint Paths} to {\sc Long Edge Disjoint Paths}. Finally, we need that the pathwidth remains bounded and this follows from the previous lemma. \qed  
\end{proof}

\section{Final Remarks}
\label{sec:final-remarks}

In this paper we identified for the first time C123-problems that distinguish between being polynomial-time solvable and hard in the polynomial hierarchy. We also identified several C23-problems, which are hard on bounded pathwidth, including 
  {\sc LBHom}, {\sc LSHom} and {\sc LIHom}. 
Moreover, we proved hardness for bounded path-width for QCSP$(K_3)$ and {\sc Sequential $3$-Colouring Construction Game}. We do not know if the latter two problems satisfy C2 and C3. We leave this for future work.

{\color{black}
We have begun the investigation into QCSP algorithmics under the omission of a labelled graph as a subgraph, which is annotated according to which vertices represent universal and existential variables, respectively. This is a natural framework in which to study QCSPs. We could further have given a relative order for these quantified variables, however we leave this as future work. Our outstanding open question from this section concerns {\sc $\Pi_2$-$\QCSP(K_3, \{1,2\},\{1,3\})$} on $P_m(\exists^m)$-subgraph-free graphs: is this in P? One could as well ask this question for $\Pi_2$-QBF. More generally, if we were looking at (unbounded alternation) $\QCSP(K_3, \{1,2\},\{1,3\})$, our algorithms might need to be more sophisticated.
}

We also gave an example of a problem ({\sc Long Edge Disjoint Paths}) that is a C23-problem that is \NP-complete for some class of graphs of bounded path-width but becomes polynomial-time solvable on all classes of bounded treedepth.
Whether $\QCSP(K_3)$ is another such example is an open question. That is, we do not know if $\QCSP(K_3)$ is polynomial-time for graph classes of bounded treedepth.
This is \textcolor{black}{another} major open problem arising from our work.

\subsubsection*{Acknowledgements.} We are grateful to Mark Siggers for discussions around Theorem~\ref{thm:graph-hom-main}. We thank several anonymous reviewers for useful comments for the final version. The second author is supported by EPSRC grant EP/X03190X/1. The third author is supported by EPSRC Grant EP/X01357X/1.



\end{document}